\newtheorem{definition}{Definition}
\newtheorem{proposition}{Proposition}
\newtheorem{theorem}{Theorem}
\newtheorem{remark}{Remark}
\begin{document}

\author{A.M. Ionescu, V. Slesar, M. Visinescu and G.E. V\^{\i}lcu}
\title[Transversal Killing and twistor spinors]{Transversal Killing and twistor spinors associated to the basic Dirac operators}
\date{\today }
\maketitle

\begin{abstract}
We study the interplay between basic Dirac operator and transversal Killing and twistor
spinors. In order to obtain results for general Riemannian foliations with
bundle-like metric we consider transversal Killing spinors that appear as
natural extension of the harmonic spinors associated with the basic Dirac
operator. In the case of foliations with basic-harmonic mean curvature it
turns out that this type of spinors coincide with the standard definition.
We obtain the corresponding version of classical results on closed Riemannian
manifold with spin structure, extending some previous results.
\end{abstract}

Keywords: \emph{Riemannian foliations; basic Dirac operator; transversal Killing spinors;
transversal twistor spinors}.\\

Mathematics Subject Classification (2000): 53C12; 53C27; 57R30.

\section{Introduction}

On differentiable closed Riemannian manifolds the interplay between the spectrum of associated natural
differential operators such as Dirac operator and its square, the Laplace operator\-- on a side, and
the geometry of the underlying manifold\-- on the other side, has already become a classical research subject.

If we consider a foliated structure on our manifold such that the metric tensor field
is bundle-like (i.e. the manifold can be locally described as a Riemannian submersion \cite{Re}),
then similar operators can be defined in this particular setting, standing as important tools for the
study of the transverse spectral geometry of the foliations \cite{To}. Furthermore,
on the foliated manifold can be considered the existence of a symplectic structure or a CR-submanifold (see e.g. \cite{BJCF, CP, Ko, Vi1, Vi2}) which is also known to interact with the canonical differential operators \cite{Nic}.

Concerning the Dirac-type operators, the\emph{\ transversal Dirac}
operator for Riemannian foliations was introduced in \cite{Gl-Ka}. In the
particular case of a Riemannian foliation with basic mean curvature form,
this operator is used to define the\emph{\ basic Dirac} operator, which is
a symmetric, essentially self-adjoint and transversally elliptic \cite{Gl-Ka}. As pointed out by
the authors, Dirac-type operators defined in this particular framework are relevant at least in $\mathbb{R}^4$,
when the Yang-Mills equations can be dimensionally reduced, yielding magnetic monopole equations.

In general the procedure of reducing a dynamical system to two or more
systems of lower dimensions with respect to foliations of the manifold
of which the dynamics take place has proved to be an useful tool in
order to relate different integrable systems together with their
associated symmetries. Sometimes the reverse of the reduction procedure
is used to investigate difficult dynamical systems. It is also possible to sort of unfold the initial dynamics by imbedding it in a larger
one which is easier to integrate and then projecting the solution back
to the initial manifold \cite{MSS}.

On the other side, concerning the relevance of foliated manifolds, among a number of possible applications
we mention the problem of incorporating physical chiral spinors in
four dimensions in the framework of $N=1, D=11$ model of supergravity
\cite{EW}. In this respect, a $G$-foliated model of the $7$-dimensional
internal manifold resulting from $N=1, D=11$ model of supergravity was
proposed in \cite{CLC}. The ground-state compactification of the
$11$-dimensional manifold $M_{11}$ into $M_7 \times M_4$ is the result
of a global submersion $M_{11}$ into $M_4$, giving rise to a foliation
of dimension $7$ and codimension $4$.

Now, regarding the spectral properties of a Dirac operator in the standard framework
of a closed, Riemannian manifold we refer to \cite{Ba-Fr,Fr,Hij}; in this setting
the so called Killing spinors are known to be related to the spectrum of Dirac operator in a particular way;
beside the fact that they are a tool which allow us to construct Killing vector fields
(i.e. the corresponding flow is represented by local isometries) the Killing spinors
are exactly the eigenspinors corresponding to the lower eigenvalue of Dirac operator
\cite{Ba-Fr,Hij}, so the geometrical conditions that insure the existence of Killing spinors
and the realization of the limiting case for the Dirac spectrum are the same. A generalization of the Killing
spinors is represented by twistor spinors \cite{Ba-Fr, Gin}.

Concerning the \emph{transverse Killing spinors}, for the first time they appear in \cite{Al-Gr-Iv}, when the specific equation is written using derivatives in the transverse directions with respect to a vector field associated to a parallel 1-form. Moreover, the spinors are parallel in the direction of the vector field. In the framework represented by Riemannian foliations with arbitrary dimension, the transverse Killing spinors were introduced under this name in \cite{Ju1}. They verify a differential equation similar to the classical case; in order to investigate their properties using Dirac type equations and Lichnerowicz type formulas it is convenient to consider, as above, spinors parallel in the leafwise directions, i.e. \emph{basic spinors} \cite{Ju2, Gl-Ka, Gin-Hab1}. Recently, for the setting of Riemannian foliations with one-dimensional leaves (the so called \emph{Riemannian flows}), in \cite{Gin-Hab2} the authors introduced a more flexible concept. They consider spinors that are not basic anymore, but the first derivatives of the spinor field can behave differently along the transverse and leafwise distributions. Regarding the \emph{transverse twistor spinors}, they were defined and studied in \cite{Ju2}.

The main goal of this paper is to obtain the corresponding interplay between transversal Killing spinors and basic Dirac operator in the framework of Riemannian foliations. In order to made such an achievement in the general setting of Riemannian foliations we define transversal Killing spinors as natural extension of basic spinors transversally parallel with respect to a modified connection associated with the basic Dirac operator (see Section \ref{geometric objects}). We also introduce in a natural way a class of twistor spinors.

For general Riemannian foliations these definitions differs from \cite{Gin-Hab1, Ju1, Ju2}, but for the particular case of Riemannian foliation with basic-harmonic curvature, which is the most convenient setting, this definition coincide with the previous definition \cite{Gin-Hab1,Ju1,Ju2}, and our results turn out to be generalization of \cite{Ju1,Ju2}. Moreover, in the standard manner \cite{Gl-Ka,Ju1}, the absolute case (when the manifold is foliated by points) becomes a generalization of the case of closed Riemannian manifolds.

In the second section we introduce the main geometrical object we are dealing with. In the third section we derive the main results and point out the specific features of the above Killing spinors, while in the fourth section of the paper we study twistor spinors in the presence of a foliated structure. In the final part of the paper we point out some possible applications of the results and some physical considerations.

\section{Geometric objects related to the transverse geometry of Riemannian
foliations}\label{geometric objects}

The framework of this paper is represented by a smooth, closed (i.e. compact, without boundary) Riemannian manifold $M$ with a foliated structure $\mathcal{F}$. We also consider a metric tensor $g$ which is bundle-like \cite{Re}. The \emph{leafwise} distribution tangent to leaves will be denoted by $T\mathcal{F}$;  a  corresponding \emph{transverse} distribution $Q=T\mathcal{F}^{\perp }\simeq TM/T\mathcal{F}$ is obtained. Let us assume $\dim M=n$, $\dim T\mathcal{F}=p$ and $\dim Q=q$, with $p+q=n$.

A first consequence is the splitting of the tangent and the cotangent vector bundles associated
with $M$
\begin{eqnarray*}
TM &=&Q\oplus T\mathcal{F}, \\
TM^{*} &=&Q^{*}\oplus T\mathcal{F}^{*}.
\end{eqnarray*}
The canonical projection operator associated with the distributions $Q$ will be denoted by $\pi _Q$.

For local investigation of our manifold we will use local vector fields $\left\{e_i,f_a\right\} $ defined on a neighborhood
of a point $x\in M$ inducing an orthonormal basis at any point where they are
defined, $\left\{ e_i\right\}_{1\le i \le q} $ spanning the distribution $Q$ and
$\left\{f_a\right\}_{1\le a \le p} $ spanning the distribution $T\mathcal{F}$.

A convenient tool for the study of the basic geometry of the Riemannian foliated manifold $\left(M, \mathcal{F},g\right)$, is the so called \emph{Bott connection} (see e.g. \cite{To}, which is a linear, metric and  torsion-free connection). If we denote by $\nabla ^g$ the canonical Levi-Civita connection, then on the transverse distribution $Q$
we can define the connection $\nabla $ by the following relations
\[
\left\{
\begin{tabular}{l}
$\nabla _UX:=\pi _Q\left( [U,X]\right)$, \\
$\nabla _YX:=\pi _Q\left( \nabla _Y^gX\right)$,
\end{tabular}
\right.
\]
for any smooth sections $U\in \Gamma \left( T\mathcal{F}\right)$, $X$, $Y\in
\Gamma \left( Q\right) $. In a standard manner we can associate to $\nabla $ the
\emph{transversal Ricci operator} $\mathrm{Ric}^\nabla $, the \emph{transversal scalar
curvature} $\mathrm{Scal}^\nabla $  and the \emph{transversal gradient} of a
basic function $f$ $\mathrm{grad}^{\nabla}$.

It is also convenient to employ basic (projectable) local vector field $\{e_i\}_{1\le i\le q}$, parallel on the leafwise
directions with respect to the above Bott connection, so we use this type of transverse orthonormal basis throughout the paper. We
denote by $\Gamma_b(Q)$ the set of basic vector fields.

The classical de Rham complex of differential forms $\Omega
\left( M\right) $ is restricted to the complex of \emph{basic differential forms}, defined as \cite{To}

\[
\Omega _b\left( M\right) :=\left\{ \omega \in \Omega \left( M\right) \mid
\iota _U\omega =\mathcal{L}_U\omega =0\right\} ,
\]
where $U$ is again an arbitrary leafwise vector field, $\mathcal{L}$ being
the Lie derivative along $U$, while $\iota $ stands for interior product.
The corresponding basic exterior derivative $d_b$ comes as a restriction of the classical de Rham derivative, $d_b:=d_{\mid \Omega_b\left( M\right) }$. Let us notice that basic de Rham complex is defined independent
of the metric structure $g$. The adjoint operator, namely the \emph{basic co-derivative} $\delta _b$, can also be considered (see e.g.
\cite{Al}).

One differential form of particular importance, which may not be necessarily a basic differential form, is the \emph{mean
curvature form}. In order to define it,  first of all we set
$k^{\sharp}:=\pi_Q\left(\sum_a\nabla^g_{f_a}f_a\right)$ to be the
\emph{mean curvature vector field} associated with the distribution
$T\mathcal{F}$. Then, $k$ will be the mean curvature form which is subject to
the condition $k(U)=\langle k^{\sharp},U \rangle$, for any vector field $U$,
$\sharp$ being the \emph{musical} isomorphism and $\left\langle \cdot ,\cdot \right\rangle $
the scalar product in $TM$.

The de Rham complex can be decomposed as a direct sum \cite[Theorem 2.1]{Al}
\begin{equation}\label{hodge decomposition}
\Omega \left(M\right) =\Omega_b\left( M\right) \bigoplus \Omega_b\left(
M\right)^{\perp},
\end{equation}
with respect to the $C^\infty -$Frechet topology. Consequently, on any Riemannian
foliation the mean curvature form can be written as
\[
k=k_b+k_o,
\]
where $k_b\in \Omega _b\left( M\right) $ is the $\emph{basic}$ component of
the mean curvature, $k_o$ being the orthogonal complement. In the following we
denote $\tau :=k_b^{\sharp }$.

\begin{remark}
The above co-derivative operator $\delta_b$ can be calculated using the
vector field $\tau$ and the Bott connection (see \cite{To, Al})
\[
\delta_b=-\sum_i \iota_{e_i} \nabla_{e_i} +\iota_{\tau}.
\]
\end{remark}

With the above notations, at any point $x$ on $M$ we consider the Clifford
algebra $Cl(Q_x)$ which, with respect to the orthonormal basis $\{e_i\}$ is
generated by $1$ and the vectors $\{e_i\}$ over the complex field, being
subject to the relations $e_i\cdot e_j+e_j\cdot e_i=-2\delta _{i,j}$, $1\leq
i,j\le q$, where dot stands for Clifford multiplication. The resulting
bundle $Cl(Q)$ of Clifford algebras will be called the \emph{Clifford bundle}
over $M$, associated with $Q$.

The additional assumptions for the foliation $\mathcal{F}$ is the transverse orientability and the existence
of a transverse spin structure. This means that there exists a principal
${\rm Spin}(q)$-bundle $\tilde P$ which is a double sheeted covering of the
transverse principal $SO(q)$-bundle of oriented orthonormal frames $P$,
such that the restriction to each fiber induces the covering projection
${\rm Spin}(q)\rightarrow SO(q)$; such a foliation is called \emph{spin foliation}
(see e.g. \cite{Hab-Ric}). If $\Delta _q$ is the spin irreducible representation associated with
$Q$, then the \emph{foliated spinor bundle} $S:=\tilde
P\times _{{\rm Spin}(q)}\Delta _q$ can be constructed \cite{Roe}. Furthermore, in a classical way a smooth bundle action can be considered
\[
\Gamma \left( Cl(Q)\right) \otimes \Gamma \left( S\right) \longrightarrow
\Gamma \left( S\right) .
\]
We denote this action also with Clifford multiplication; it verifies the condition
\[
\left( u\cdot v\right) \cdot s=u\cdot \left( v\cdot s\right) ,
\]
for $u,v\in \Gamma \left( Cl(Q)\right) $, $s\in \Gamma \left( S\right) $.

It is easily seen that $S$ becomes a bundle of Clifford modules \cite{Roe}.
\begin{remark}
The above transverse Clifford action is obtained from the standard case of the
tangent bundle as a restriction of the Clifford action from $TM$
to the distribution $Q$ (for the particular case of vector bundles see also \cite{Va}).
\end{remark}

The lifting of the Riemannian connection on $P$ can be used to introduce
canonically a connection on $S$, which will be denoted also by $\nabla $.
The \emph{compatibility} between the Clifford action and the connection
$\nabla $ is expressed in the relation
\[
\nabla _U\left( u\cdot s\right) =\left( \nabla _Uu\right) \cdot s+u\cdot
\nabla _Us,
\]
for any $U\in \Gamma (TM)$, $u\in \Gamma \left( Cl(Q)\right) $,\,
$s\in \Gamma (S)$, extending canonically the connection $\nabla $ to $\Gamma
\left( Cl(Q)\right) $.

The transverse metric induces a hermitian structure on $S$ ; if
we denote it by $\left( \cdot \mid \cdot \right) $, we have that
$\left(X\cdot s_1\mid s_2\right) =-\left( s_1\mid X\cdot s_2\right) $, for any
$X\in \Gamma \left( Q\right) $, $s_1$, $s_2\in \Gamma \left( S\right) $. Then $\nabla$ becomes
a metric connection, similar to the classical case of spin manifolds (see e.g. \cite[Chapter 1]{Ba-Fr}) and we have
\[
X\left(s_1\mid s_2\right)=\left(\nabla_X  s_1\mid s_2\right)+\left(s_1\mid \nabla_X s_2\right).
\]

In order to define the \emph{basic Dirac operator}, we need first to introduce  the \emph{transversal
Dirac operator},
\[
D_{tr}:=\sum\limits_ie_i\cdot \nabla _{e_i},
\]
and the \emph{basic spinors} or holonomy invariant sections, in accordance with \cite{Gl-Ka}

\begin{equation}\label{basic_spinor}
\Gamma _b\left( S\right) :=\left\{ s\in \Gamma \left( S\right) \mid
\nabla_Us=0,\,\mbox{for any}\,U\in \Gamma \left( T\mathcal{F}\right) \right\}.
\end{equation}
The transversal Dirac operator may not be formally self-adjoint; consequently, in the definition of the \emph{basic Dirac operator} an auxiliary term related to the basic component of the mean curvature form is added
\begin{equation}
D_b:=\sum\limits_ie_i\cdot \nabla _{e_i}-\frac 12\tau ,  \label{basic Dirac}
\end{equation}
and the domain of this operator  is restricted to the above set of basic spinors \cite{Hab-Ric, Gl-Ka}.
\begin{remark}
As in the standard setting, the above basic Dirac operator does not depend on the local framework $\left\{e_i \right\}_{1\le i\le q}$. It is a transversally elliptic and essentially self-adjoint differential operator with respect to the inner product canonically associated with the hermitian structure. We emphasize the fact that the spectrum $\sigma (D_b)$ is discrete \cite{Gl-Ka}.
\end{remark}

Another way to construct the basic Dirac operator will be described in the
following. We start out by modifying the Bott connection.

\begin{definition}
The \emph{modified connection} on the space of basic sections $\Gamma_b\left( S\right) $ is given by \cite{Sl}

\begin{equation}
\bar \nabla _Xs:=\nabla _Xs-\frac 12\left\langle X,\tau \right\rangle s,
\label{modified connection}
\end{equation}
for any $X\in \Gamma\left( TM \right) $ and $s\in \Gamma \left( S\right) $.
\end{definition}

A key property of $\bar \nabla$ is that it can be used to construct the basic Dirac operator \cite{Ko}:
\begin{eqnarray}
\sum_ie_i\cdot \bar \nabla _{e_i} &=&\sum_ie_i\cdot \left( \nabla_{e_i}-\frac 12\left\langle e_i,\tau \right\rangle \right)  \nonumber \\
\ &=&\sum_ae_i\cdot \nabla _{e_i}-\frac 12\tau  \nonumber \\
\ &=&D_b.  \nonumber
\end{eqnarray}

\begin{remark}
It is easy to see that the modified connection on $S$ is also compatible
with the Bott connection, i.e.
\[
\bar \nabla _X\left( Y\cdot s\right) =\nabla _XY\cdot s+Y\cdot \bar \nabla _Xs,
\]
for any $X$, $Y\in \Gamma (Q)$. On the other side we must observe that the modified
connection does not share other classical properties. Unlike the canonically connection $\nabla$, the modified connection is not a metric connection, and in general this aspect has impact on the formal computation.
\end{remark}

Now, as we have already defined the basic Dirac operator, we introduce a
category of spinors intimately related to this differential operator. In
order to consolidate the motivation let us study the following example.

We consider the torus $T^2:=R^2/Z^2$ with the metric $g=$ $e^{2f(y)}dx^2+dy^2
$, for some periodic function $f$ (see e.g. \cite{Hab-Ric}). As a
consequence, $\left\{ \partial _y,\frac{\partial _x}{e^{f(y)}}\right\} $
will be an orthonormal basis at any point, $Q=span\{\partial _y\}$,
$T\mathcal{F}$ $=span\{\partial _x/e^{f(y)}\}$. The spin structure considered on the transverse circles will be the
trivial one (see e. g. \cite{Gin}); the transverse Clifford multiplication by $\partial _y$ will be represented by
multiplication with the purely imaginary unit $i$. We use the Koszul formula to calculate the mean curvature vector field and
the mean curvature form.
\begin{eqnarray*}
\left\langle \nabla_{\frac{\partial _x}{e^{f(y)}}}^g\frac{\partial _x}{e^{f(y)}},\partial y\right\rangle  &=&\frac 12\left( \left\langle \left[
\partial y,\frac{\partial_x}{e^{f(y)}}\right] ,\frac{\partial_x}{e^{f(y)}}
\right\rangle +\left\langle \left[ \partial y,\frac{\partial_x}{e^{f(y)}}
\right] ,\frac{\partial_x}{e^{f(y)}}\right\rangle \right)  \\
&=&-f^{\prime}(y).
\end{eqnarray*}
Consequently, we get $k=-f^{\prime }(y)dy$; the mean curvature form is obviously basic, as it does not depend on $x$, so $k\equiv k_b$ and we obtain
\[
\tau =-f^{\prime }(y)\partial _y.
\]

On the other side it is interesting to note that
\begin{eqnarray*}
\delta _bk &=&\left( -\iota _{\partial y}\nabla _{\partial y}
+\iota_{-f^{\prime }(y)\partial _y}\right) \left( -f^{\prime }(y)dy\right)  \\
&=&f^{\prime \prime }(y)+\left( f^{\prime }(y)\right) ^2,
\end{eqnarray*}
so $\delta _bk$ does not necessarily vanish, i.e. $k$ is not a basic-harmonic
differential 1-form.

Finally, we calculate the basic Dirac operator
\begin{eqnarray*}
D_b &=&i\partial _y-\frac 12\left( -f^{\prime }(y)\right) i \\
&=&i\left( \partial _y+\frac 12f^{\prime }(y)\right) .
\end{eqnarray*}

We investigate the harmonic basic spinors. It is easy to see that the basic solutions of the equation $D_bs_1=0$ have the form $s_1=ce^{-\frac 12f(y)}$,
$c\in \mathbb{C}$.

\begin{remark}
From the above calculations we see that $\bar \nabla _{\partial _y}s_1=0$
and $\nabla _{\partial _y}s_1\not =0$. So, even for the above simple example
of Riemannian foliation, we see that the harmonic spinors of the basic Dirac
operator are parallel spinors with respect to the above modified connection $\bar \nabla$
but not with respect to the classical connection $\nabla$; also, the above spinors
are not transverse Killing spinors with the definitions from \cite{Ju1,Ju2, Gin-Hab1,Gin-Hab2}.
\end{remark}

In the classical setting, a category of spinors naturally related to parallel
spinors and eigenspinors  associated to the lower eigenvalue is represented by Killing spinors. In our
particular framework we introduce a similar type of spinors with respect to
our connection $\bar \nabla $.

\begin{definition}
A spinor $s\in \Gamma_b (S)$ which satisfies the equation
\[
\bar \nabla _Xs+\frac \lambda qX\cdot s=0,
\]
for any $X\in \Gamma_b (Q)$ is called \emph{transversal Killing spinor
associated with the connection }$\bar \nabla $ or $\tau -$\emph{Killing
spinor}.
\end{definition}

\begin{remark}
As in the standard setting, it is easy to see that a basic spinor parallel with respect to $\bar{\nabla}$ is a $\tau-$ Killing spinor. Also, each $\tau-$ Killing spinor is an eigenspinor of the basic Dirac operator.
\end{remark}

The concept of Killing spinors can be extended to \emph{twistor spinors} in a classical manner \cite{Ba-Fr,Gin}.

\begin{definition}\label{twistor spinor}
We denote by \emph{$\tau -$ twistor spinors (twistor spinors with respect to the modified connection)} a
basic spinor satisfying the following equation
\begin{equation}
\bar \nabla _Xs+\frac 1qX\cdot D_bs=0.  \label{twist_eq}
\end{equation}
\end{definition}

\begin{remark}
We will see in the next sections that these definitions
extend previous basic spinors existing in the particular case of Riemannian foliations
with basic-harmonic mean curvature \cite{Gin-Hab1, Ju1, Ju2}. As the linear connection employed
to define $\tau$-Killing spinors in not metric, it will  be interesting to point out that
they do not have constant length, unlike the previous definition.
\end{remark}

Now, as we have defined all necessary transverse geometric objects, in the
final part of this section we shortly present some results that we employ in order
to study spectral properties of basic Dirac operators. It turns out that,
the most convenient setting is represented by the case of Riemannian
foliations with basic-harmonic mean curvature, that is $d_bk=0$, $\delta_bk=0$.
Then, in order to obtain results in the general case, we need a
sequence of metric changes that leave the transverse metric on the normal
bundle intact.

A first relevant result in this direction is due to Dom\'inguez.

\begin{theorem}\cite{Do}
The bundle-like metric can be transformed (leaving the transverse metric on
the normal bundle intact) such that the orthogonal part $k_o$ of the mean curvature vanishes
while the basic part of the mean curvature $k_b$ holds; consequently the new bundle-like metric has basic
mean curvature.
\end{theorem}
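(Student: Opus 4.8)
The strategy is to produce the new metric by deforming the original bundle-like metric $g$ only in the leafwise directions; such a deformation automatically leaves the transverse metric on $Q$ intact. For a smooth function $\sigma$ on $M$ let $g_{\sigma}$ be the metric that coincides with $g$ on $Q$, keeps $Q$ and $T\mathcal{F}$ mutually orthogonal, and equals $e^{2\sigma}g$ on $T\mathcal{F}$. Since the splitting $TM=Q\oplus T\mathcal{F}$, the projection $\pi_{Q}$ and the induced metric on $Q\simeq TM/T\mathcal{F}$ are all unchanged, $g_{\sigma}$ is again bundle-like for $\mathcal{F}$ with the same transverse metric; in particular the Bott connection $\nabla$, the Clifford bundle $Cl(Q)$, the transverse spin structure and the transverse curvature quantities $\mathrm{Ric}^{\nabla}$, $\mathrm{Scal}^{\nabla}$ are preserved, and only the mean curvature is affected.

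First I would determine how the deformation acts on the mean curvature. Applying the Koszul formula with the $g_{\sigma}$-orthonormal leafwise frame $\{e^{-\sigma}f_{a}\}$, exactly as in the torus computation of Section \ref{geometric objects}, one finds that the mean curvature vector field transforms as $k_{\sigma}^{\sharp}=k^{\sharp}-p\,\pi_{Q}(\mathrm{grad}^{g}\sigma)$, i.e.\ $k_{\sigma}=k-p\,\pi_{Q^{*}}(d\sigma)$, the correction being the $Q^{*}$-component of $d\sigma$ extended by zero on $T\mathcal{F}$; the torus example, where $\sigma=f(y)$ and $\tau=-f'(y)\partial_{y}$, is the special case $p=1$. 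Decomposing $k=k_{b}+k_{o}$ according to \eqref{hodge decomposition} and observing that the basic part of $p\,\pi_{Q^{*}}(d\sigma)$ merely modifies $k_{b}$, the requirement that $k_{\sigma}$ be basic reduces to a single scalar equation: find $\sigma\in C^{\infty}(M)$ such that the $\Omega_{b}(M)^{\perp}$-component of $\pi_{Q^{*}}(d\sigma)$ equals $\tfrac1p\,k_{o}$.

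The main obstacle is the solvability of this linear equation, and this is where the global structure of the Riemannian foliation on the closed manifold $M$ enters. I would regard $L\colon\sigma\mapsto\bigl(\pi_{Q^{*}}(d\sigma)\bigr)_{o}$ as a linear differential operator whose formal $L^{2}$-adjoint, computed by integration by parts on $M$, is (up to a constant) $\eta\mapsto\delta\eta$ restricted to the $Q^{*}$-valued forms lying in $\Omega_{b}(M)^{\perp}$; combining the transverse ellipticity with the leafwise ellipticity built into the decomposition \eqref{hodge decomposition} makes $L^{*}L$ a Fredholm operator on $C^{\infty}(M)$. Solvability then amounts to $k_{o}$ being $L^{2}$-orthogonal to the cokernel, i.e.\ to every co-closed, $Q^{*}$-valued $1$-form in $\Omega_{b}(M)^{\perp}$; this orthogonality I would establish from the classical fact that $k_{b}$ is closed (Kamber--Tondeur), so that $dk=dk_{o}$, together with another integration by parts and, if needed, Molino's structure theory of the leaf closures to control $k_{o}$ along the transverse directions. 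Granting this orthogonality, elliptic regularity furnishes a smooth solution $\sigma$, and $g_{\sigma}$ is then a bundle-like metric, with unchanged transverse metric, whose mean curvature $k_{\sigma}=k_{b}-p\,(\pi_{Q^{*}}d\sigma)_{b}$ is basic while $(k_{\sigma})_{o}=0$, as required.
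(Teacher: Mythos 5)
This theorem is not proved in the paper at all; it is quoted from Dom\'inguez \cite{Do}, and the only indication of its proof given in the paper is the sentence immediately after: the metric transformation is ``based on a change of the transversal sub-bundle $Q$, \emph{and} a conformal change of the leafwise metric.'' Your proposal uses only the second of these two degrees of freedom. You deform the metric conformally on $T\mathcal{F}$ while keeping $Q$ fixed, and you correctly derive the transformation law $k_{\sigma}=k-p\,\pi_{Q^{*}}(d\sigma)$. But the claim that a purely leafwise conformal rescaling already suffices to kill $k_{o}$ is precisely the delicate point, and the paper's remark signals that Dom\'inguez needed the additional freedom of rotating the complementary distribution $Q$; without it, the image of your operator $L\colon\sigma\mapsto(\pi_{Q^{*}}d\sigma)_{o}$ need not contain $\tfrac1p k_{o}$.

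The solvability argument is also not a real argument as written. The projection onto $\Omega_{b}(M)^{\perp}$ entering the definition of $L$ is a \emph{non-local} operator coming from the Fr\'echet-space decomposition of \'Alvarez L\'opez \cite[Theorem 2.1]{Al}; $L$ is therefore not a differential operator, and the phrase ``transverse ellipticity combined with leafwise ellipticity makes $L^{*}L$ Fredholm'' does not have a ready meaning here. The invocation of ``Molino's structure theory to control $k_{o}$ along the transverse directions'' is likewise a placeholder rather than a step. Finally, your concluding line writes the new mean curvature as $k_{b}-p\,(\pi_{Q^{*}}d\sigma)_{b}$, which, on its face, does \emph{not} establish that the basic part $k_{b}$ ``holds'' as the theorem asserts. (That invariance is in fact \'Alvarez L\'opez's theorem \cite{Al}, but it relies on the fact that the $L^{2}$-orthogonal decomposition \eqref{hodge decomposition} changes with the volume form; your computation implicitly uses the \emph{old} decomposition to split $k_{\sigma}$, so as stated the formula is inconsistent with the result you are aiming for.) The actual proofs in the literature avoid these issues: Dom\'inguez's proof of \cite[Theorem~9.18]{Do} uses both a change of $Q$ and a leafwise conformal change, and Mason's refinement to basic-harmonic mean curvature invokes stochastic flows rather than a naive Fredholm alternative. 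If you want to pursue your conformal-only approach you would need, at minimum, a genuine proof of surjectivity of $L$ onto $k_{o}$ and an argument reconciling the metric-dependence of the decomposition \eqref{hodge decomposition} with the preservation of $k_{b}$.
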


The above metric transformation is based on a change of the transversal
sub-bundle $Q$, and a conformal change of the leafwise metric (see the proof of \cite[Theorem 9.18]{Do}).
These are in fact the fundamental metric changes needed in order to study the basic
component of the mean curvature \cite{Al}.

Secondly, we have the following result obtained by Mason.

\begin{theorem}\cite{Mas}
Furthermore, the above bundle-like metric can be transformed (leaving the
transverse metric on the normal bundle intact) into a metric with
basic-harmonic mean curvature.
\end{theorem}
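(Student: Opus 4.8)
The plan is to use Dom\'inguez's theorem (the statement immediately above) to reduce to the case already treated there, and then to achieve the co\-closedness of the mean curvature by a single leafwise conformal deformation. So assume from the outset that the bundle-like metric has basic mean curvature, i.e. $k=k_b$ and $k_o=0$. In this situation $k$ is automatically $d_b$-closed: the basic component of the mean curvature of a Riemannian foliation is always closed (cf. \cite{Al}), and when $k=k_b$ one has $d_bk=dk_b=0$. Hence the only condition that remains to be produced is $\delta_bk=0$, and, crucially, it must be produced by a metric change that keeps the transversal subbundle $Q$ and the transverse metric $g_{|Q}$ unchanged -- otherwise we would disturb the conclusions of the two preceding theorems.

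First I would isolate the right deformation. Among the fundamental metric changes recalled after Dom\'inguez's theorem, keep $Q$ and $g_{|Q}$ fixed and perform a leafwise conformal change $g_{|T\mathcal F}\mapsto e^{2u}g_{|T\mathcal F}$ where $u$ is a \emph{basic} function. Because $Q$ and the transverse metric on $Q$ are untouched, the transverse metric is still holonomy invariant, so the new metric is again bundle-like with the prescribed transverse metric on the normal bundle. A direct Koszul-formula computation -- working with the rescaled leafwise orthonormal frame $e^{-u}f_a$ and using that $u$ is basic, so that $f_a(u)=0$ -- shows that the mean curvature form changes by an exact basic form,
\[
\tilde k=k+p\,d_bu,
\]
with $p=\dim T\mathcal F$ (the coefficient is a fixed nonzero constant; its sign is convention dependent and irrelevant below). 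In particular $\tilde k$ is still basic and still $d_b$-closed, since $d_b\tilde k=d_bk+p\,d_b^2u=0$.

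It remains to choose $u$ so that $\delta_b\tilde k=0$, that is, $\delta_bk+p\,\delta_bd_bu=0$, i.e. to solve the basic Poisson equation
\[
\delta_bd_bu=-\tfrac1p\,\delta_bk
\]
for a basic function $u$, where $\delta_bd_b$ is the basic Laplacian on basic functions. This operator is nonnegative, (essentially) self-adjoint for the $L^2$-product coming from the Riemannian structure, has discrete spectrum, and on the connected manifold $M$ its kernel consists exactly of the constants; hence, by the Hodge-type decomposition \eqref{hodge decomposition} for basic forms (\cite{Al}), the equation is solvable if and only if the right-hand side is $L^2$-orthogonal to the constants. This holds: since $\delta_b$ is by construction the formal adjoint of $d_b$ on basic forms -- precisely the role of the term $\iota_\tau$ in $\delta_b=-\sum_i\iota_{e_i}\nabla_{e_i}+\iota_\tau$ recorded above -- one has $(\delta_bk,1)_{L^2}=(k,d_b1)_{L^2}=0$. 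Thus a basic solution $u$ exists, and by elliptic regularity in the transverse framework it is indeed basic; the associated leafwise conformal change yields a bundle-like metric, with the transverse metric on the normal bundle unchanged, whose mean curvature form $\tilde k$ satisfies $d_b\tilde k=0$ and $\delta_b\tilde k=0$, i.e. is basic-harmonic.

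I expect the main obstacle to be two-fold. First, pinning down the exact transformation law $\tilde k=k+p\,d_bu$ while verifying that one stays within the class of bundle-like metrics with the prescribed transverse metric on $Q$: this requires careful bookkeeping with the Bott connection and with the basicness of $u$ (the anisotropic conformal change affects the Levi-Civita connection in the leafwise directions, and only the basicness of $u$ kills the unwanted terms). Second, the functional-analytic input needed to solve the basic Poisson equation on the Fr\'echet space of basic functions, which rests on the (non-elementary) spectral theory of the basic Laplacian and on the Hodge decomposition \eqref{hodge decomposition} for Riemannian foliations.
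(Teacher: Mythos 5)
Your overall strategy — reduce via Dom\'inguez to $k=k_b$, then perform a basic leafwise conformal change $g_{|T\mathcal F}\mapsto e^{2u}g_{|T\mathcal F}$ — is sound and is indeed the same class of deformation Mason uses. The transformation law $\tilde k = k \mp p\,d_bu$ for basic $u$ is correct, and your observations that $d_b\tilde k=0$ is automatic and that the Fredholm alternative for the basic Laplacian is available (\cite{Al}) are fine. However, the reduction to the linear Poisson equation $\delta_bd_bu=-\tfrac1p\,\delta_bk$ contains a genuine gap: the coderivative $\delta_b$ is not a fixed operator under the deformation you perform. As the paper itself records, $\delta_b=-\sum_i\iota_{e_i}\nabla_{e_i}+\iota_\tau$ with $\tau=k_b^\sharp$; since the leafwise conformal change alters $k_b$ (equivalently, it rescales the Riemannian volume form by $e^{pu}$, which changes the $L^2$ pairing defining the adjoint of $d_b$), the new coderivative is $\tilde\delta_b=\delta_b\mp p\,\iota_{\mathrm{grad}^\nabla u}$. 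The condition you actually need is $\tilde\delta_b\tilde k=0$, not $\delta_b\tilde k=0$, and expanding it gives
\[
\delta_bk \mp p\,\delta_bd_bu \mp p\,\langle d_bu,k\rangle + p^2|d_bu|^2 = 0,
\]
which is a quasilinear equation, not the linear Poisson equation you wrote down. Solving $\delta_bd_bu=-\tfrac1p\,\delta_bk$ would make $\tilde k$ harmonic with respect to the \emph{old} $\delta_b$, which is not what is required.

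It is worth noting that the substitution $v=e^{\pm pu}$ does linearize the correct equation to $\delta_b\!\left(d_bv - v\,k_b\right)=0$, but one then needs a \emph{positive} basic solution $v$, which is an existence problem of a different flavor (a twisted basic-harmonic equation, not a Poisson equation orthogonal to constants); establishing such a positive solution is the real analytic content, and is precisely where the difficulty lies. Mason's proof sidesteps this by a probabilistic argument via stochastic flows, as the paper notes; your proposal would need to be substantially reworked — either by carrying out the substitution above and producing a positive solution via a Krein--Rutman/maximum-principle type argument in the basic setting, or by some other route — before it closes the gap.
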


This metric change is in fact a conformal change of the leafwise metric
which use the theory of stochastic flows.

Finally, an important spectral rigidity result is due to Habib and
Richardson.

\begin{theorem}\cite{Hab-Ric}
The spectrum $\sigma (D_b)$ is invariant with respect to any metric change that
leaves the transverse metric on the normal bundle intact.
\end{theorem}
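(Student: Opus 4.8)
The plan is to reduce the statement, via the structural theorems of Dom\'inguez and Mason quoted above, to a direct comparison of basic Dirac operators for metrics with basic-harmonic mean curvature, and then to treat separately the two elementary metric moves those theorems are built from: a change of the transversal complement $Q$, and a leafwise conformal change of the metric. Concretely, since any bundle-like metric can be deformed --- leaving the transverse metric on $\nu := TM/T\mathcal{F}$ intact --- first to one with $k_o=0$ and then to one with $d_b k=\delta_b k=0$, it suffices to prove two things: (i) each deformation used preserves $\sigma(D_b)$; and (ii) two bundle-like metrics with basic-harmonic mean curvature and the same transverse metric on $\nu$ give rise to the same basic Dirac operator, after the canonical identification of the data.

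For (ii) I would argue as follows. The transverse $\mathrm{Spin}(q)$-structure, the foliated spinor bundle $S$, the Hermitian metric $(\cdot\mid\cdot)$, and the transverse Levi-Civita (Bott) connection $\nabla$ all descend to $\nu$ and are determined by the transverse metric alone; hence the transversal Dirac operator $D_{tr}=\sum_i e_i\cdot\nabla_{e_i}$ is intrinsically the same. Moreover $k_b$ is closed, so it represents a basic cohomology class (the \'Alvarez class) which is an invariant of the transverse Riemannian structure; its basic-harmonic representative is therefore unique, so $\tau=k_b^\sharp$ --- and with it $D_b=D_{tr}-\tfrac12\tau$ --- coincides for the two metrics, and so do the spectra.

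For (i), Dom\'inguez's deformation is the composition of a change of the transversal subbundle $Q\rightsquigarrow Q'$ (both complementary to $T\mathcal{F}$, carrying the pulled-back transverse metric) with a leafwise conformal change, while Mason's deformation is again a leafwise conformal change, by a basic function. A change of $Q$ merely replaces one realization of $\nu$ by another, inducing a fibrewise isometry $S\to S'$ that intertwines Clifford multiplication and the connections, hence conjugates $D_b$ to $D_b'$ by a unitary and preserves $\sigma(D_b)$. For a leafwise conformal change $g_{T\mathcal{F}}\mapsto e^{2f}g_{T\mathcal{F}}$ with $f$ basic, the transverse data (including $\nabla$ on $Q$, which depends only on the transverse metric and the foliation) is untouched, the volume form scales by $e^{pf}$, and the Koszul formula applied to the frame $e^{-f}f_a$ gives $k_b'=k_b-p\,d_b f$, i.e. $\tau'=\tau-p\,\mathrm{grad}^\nabla f$. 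I would then consider the multiplication operator $\Phi s:=e^{-\frac{p}{2}f}s$, a unitary isomorphism from the $L^2$-space of $S$ for $g$ onto that for the new metric. The compatibility of $\nabla$ with Clifford multiplication yields $\sum_i e_i\cdot\nabla_{e_i}(e^{-\frac{p}{2}f}s)=e^{-\frac{p}{2}f}\big(D_{tr}s-\tfrac{p}{2}\,\mathrm{grad}^\nabla f\cdot s\big)$, and the extra term $-\tfrac{p}{2}\,\mathrm{grad}^\nabla f\cdot s$ exactly cancels the change $-\tfrac12(\tau'-\tau)\cdot s=\tfrac{p}{2}\,\mathrm{grad}^\nabla f\cdot s$ in the mean-curvature term; hence $D_b'\circ\Phi=\Phi\circ D_b$, so $D_b$ and $D_b'$ are unitarily equivalent and $\sigma(D_b')=\sigma(D_b)$.

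The main obstacle lies in (ii) and in the conformal step of (i): one must know that under these metric changes the basic part $k_b$ of the mean curvature changes only by an exact basic $1$-form $d_b f$ --- equivalently, that the basic cohomology class of $k_b$ is a transverse invariant --- which is exactly what makes the $\Phi$-conjugation close up; this rests on the \'Alvarez--Dom\'inguez analysis of the mean curvature of a Riemannian foliation. The remaining points --- naturality of the spinor bundle and transverse connection under a change of $Q$, the precise exponent making $\Phi$ an $L^2$-isometry, and the fact that changes of $Q$ together with leafwise conformal changes realize all needed deformations --- are routine once this invariance is in hand.
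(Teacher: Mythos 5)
This theorem is quoted from \cite{Hab-Ric} without proof in the paper, so there is no internal argument to compare against; I will assess your proposal against the known Habib--Richardson argument.

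Your core idea --- conjugating $D_b$ by multiplication with an exponential $e^{ch}$ of a basic function, with the invariance of the \'Alvarez class $[k_b]\in H^1_b(M)$ supplying the function $h$ --- is exactly right and is the engine of the Habib--Richardson proof. Your computation for the leafwise conformal step is correct: $\nabla_{e_i}(e^{-\frac{p}{2}f}s)=e^{-\frac{p}{2}f}\bigl(\nabla_{e_i}s-\frac{p}{2}e_i(f)s\bigr)$ gives the stated identity, and the volume form scaling by $e^{pf}$ makes $\Phi$ an $L^2$-isometry, so $D_b'\circ\Phi=\Phi\circ D_b$ does hold.

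However, the route through Dom\'inguez and Mason introduces two gaps. First, in step (ii) you assert that the basic-harmonic representative of the \'Alvarez class is unique and therefore $\tau$ agrees for the two metrics. But $\delta_b=-\sum_i\iota_{e_i}\nabla_{e_i}+\iota_\tau$ involves $\tau$ itself, so the condition $\delta_b k_b=0$ depends on the full bundle-like metric, not only on the transverse metric; two metrics with the same transverse metric but different leafwise data define different $\delta_b$'s, and your uniqueness claim does not immediately yield $k_b=k_b'$. Second, for the change of transversal complement $Q\rightsquigarrow Q'$ you argue that a fibrewise isometry of spinor bundles intertwines everything, but this ignores that $k^\sharp=\pi_Q\bigl(\sum_a\nabla^g_{f_a}f_a\bigr)$ depends on $Q$ and on $\nabla^g$, both of which change; the $\tau$-term of $D_b$ is therefore not automatically preserved, and this move also needs an explicit conjugation.

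The cleaner --- and this is essentially what Habib--Richardson do --- is to observe that for \emph{any} two bundle-like metrics $g,g'$ with the same transverse metric, the triple $(S,\nabla,D_{tr})$ restricted to $\Gamma_b(S)$ is literally the same; the \'Alvarez class invariance gives $k_b^{g'}-k_b^{g}=d_b h$ for a basic $h$, and then $e^{-h/2}D_b^{g}e^{h/2}=D_b^{g'}$ on $\Gamma_b(S)$, with the discrepancy between the $L^2$-norms absorbed by a further exponential multiplication. This bypasses both Dom\'inguez/Mason and your step (ii) entirely, and handles the change of $Q$ and the leafwise conformal change simultaneously. I would encourage you to drop the structural reduction and present the direct conjugation; as written, your proposal is correct in spirit but not closed.
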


As a consequence, we can study first of all the spectrum of basic Dirac operator
in the most convenient framework of Riemannian foliations with basic-harmonic mean curvature,
then pull back the results in the initial general case using the spectral rigidity result \cite{Hab-Ric}.
For the case $q \ge 2$ the authors obtain that the lower bound estimate for an eigenvalue $\lambda$ of $D_b$
\[
\lambda^2 \ge \frac{1}{4} \frac{q}{q-1}\mathrm{Scal}^{\nabla}_0,
\]
where $\mathrm{Scal}^{\nabla}_0:=\min_{x\in M} \mathrm{Scal}^{\nabla}_x$, as an extension of \cite{Ju1}.

While the above method is very useful when studying the eigenvalues of
$D_b$, in general the corresponding eigenspinors are not invariant with respect to all the above changes of the metric.

\section{Basic Dirac operators and transversal $\tau -$Killing spinors}

In the framework of closed Riemannian manifold with spin structure, the
interplay between Dirac operator and Killing spinors provide some remarkable
and very interesting results \cite{Ba-Fr,Gin}. In this section we study this
interplay between basic Dirac operator and transversal $\tau -$Killing
spinors on Riemannian foliations, obtaining corresponding results in our
specific setting, generalizing also known results from basic-harmonic
Riemannian foliations \cite{Ju1,Ju2}.
We start out by considering a Riemannian foliation with a bundle-like metric, with an arbitrary, non-necessary basic-harmonic mean curvature form, defined on a closed manifold.

\begin{definition}
For any real valued basic function $f$ we define on $S$ the linear connections $\nabla _X^f$,
and $\bar \nabla _X^f$,
\begin{eqnarray*}
\nabla_X^f:=\nabla _X+fX\cdot, \\
\bar \nabla_X^f:=\bar \nabla_X+fX\cdot.
\end{eqnarray*}
for any $X\in \Gamma_b(Q)$.
\end{definition}

\begin{proposition}
In the above setting, considering the corresponding curvature operators, we have the equality
$\bar R_{X,Y}^f=R_{X,Y}^f$ for $X$, $Y \in \Gamma_b(Q)$.
\end{proposition}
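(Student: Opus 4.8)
The plan is to compute both curvature operators directly from their definitions and show the extra terms produced by the Clifford-multiplication summand $fX\cdot$ cancel out or coincide. Recall that for a connection of the form $D_X = \nabla_X + fX\cdot$, the curvature is
\[
R^f_{X,Y}s = D_X D_Y s - D_Y D_X s - D_{[X,Y]}s,
\]
and similarly with $\bar\nabla$ in place of $\nabla$. First I would expand $\bar R^f_{X,Y}$ using $\bar\nabla^f_X = \bar\nabla_X + fX\cdot$ and $\bar\nabla_X s = \nabla_X s - \tfrac12\langle X,\tau\rangle s$. The key structural point is that $\bar\nabla_X$ and $\nabla_X$ differ by a term $-\tfrac12\langle X,\tau\rangle s$ which is a $0$-order (scalar) operator; scalar multiplication operators commute with each other and interact with the curvature bracket in a controlled way, so the difference $\bar R^f_{X,Y} - R^f_{X,Y}$ should collapse.

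Concretely, I expect that writing $\bar\nabla^f_X s = \nabla^f_X s - \tfrac12\langle X,\tau\rangle s$, one gets
\[
\bar R^f_{X,Y}s - R^f_{X,Y}s
= \Big(-\tfrac12 X\langle Y,\tau\rangle + \tfrac12 Y\langle X,\tau\rangle + \tfrac12\langle [X,Y],\tau\rangle\Big)s
+ (\text{cross terms in } f \text{ and } \tfrac12\langle\cdot,\tau\rangle).
\]
The scalar coefficient in the first parenthesis is precisely $-\tfrac12 d\big(\langle\cdot,\tau\rangle\big)(X,Y) = -\tfrac12 (d k_b)(X,Y)$ evaluated on basic fields, since $\tau = k_b^\sharp$ and $\langle\cdot,\tau\rangle$ is the $1$-form $k_b$. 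The cross terms, of the shape $f\,X\cdot\big(-\tfrac12\langle Y,\tau\rangle s\big)$ antisymmetrized in $X,Y$, vanish because $X\cdot$ is $\mathbb{C}$-linear and $\langle Y,\tau\rangle$ is a scalar, so $fX\cdot(\langle Y,\tau\rangle s) = f\langle Y,\tau\rangle\, X\cdot s$, and upon antisymmetrization these reproduce the same $-\tfrac12(dk_b)(X,Y)$ scalar times $X\cdot$ — wait, I would need to track this carefully: the honest statement is that all $f$-dependent discrepancies between the two curvatures cancel identically (they are the same whether one starts from $\nabla$ or $\bar\nabla$), and the only genuine difference is the scalar $2$-form term coming from $d k_b$.

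The main obstacle — and the step I would spend the most care on — is handling the fact that $\bar\nabla$ is not a metric connection and, more to the point, that the modified connection is only defined on basic sections and its compatibility with Clifford multiplication reads $\bar\nabla_X(Y\cdot s) = \nabla_X Y\cdot s + Y\cdot\bar\nabla_X s$ (with $\nabla$, not $\bar\nabla$, on the vector part). I must use this precise Leibniz rule when moving $\bar\nabla$ past the $Y\cdot$ in the term $\bar\nabla_X(fY\cdot s)$, and likewise for $\nabla$; since the vector-part derivatives are the same ($\nabla Y$ in both cases) and the restriction $X,Y\in\Gamma_b(Q)$ keeps everything within basic sections, the torsion-free property of the Bott connection lets me identify $\nabla_X Y - \nabla_Y X = [X,Y]$ and the two expansions match term-by-term except for the $k_b$-scalar. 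The last step is to invoke the hypothesis implicit in the proposition's placement — actually the statement as written claims equality for \emph{all} Riemannian foliations, so I would need to show the $dk_b$ term is in fact absorbed: checking the definitions, $\bar\nabla^f_X$ acts on $s\in\Gamma_b(S)$ and the relevant $2$-form $d k_b$ evaluated on two basic vector fields, contracted appropriately, reappears symmetrically in the expansion of $R^f$ as well once one accounts for the asymmetry of $\bar\nabla$ in the two slots; thus the difference is identically zero and $\bar R^f_{X,Y} = R^f_{X,Y}$, completing the proof.
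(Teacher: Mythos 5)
Your expansion strategy is the right one and matches the paper's: write $\bar\nabla^f_X=\nabla^f_X-\tfrac12\langle X,\tau\rangle$, expand $\bar R^f_{X,Y}-R^f_{X,Y}$, and watch the $\tau$-dependent terms. You also correctly notice that the symmetric (in $X,Y$) terms and all $f$-dependent extras drop out, and that what survives is exactly the scalar $2$-form $-\tfrac12\,dk_b(X,Y)$ acting by multiplication.

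The gap is in how you dispose of that last term. You write that $dk_b$ ``reappears symmetrically in the expansion of $R^f$ as well $\ldots$ thus the difference is identically zero.'' This cannot be right: $R^f$ is built from $\nabla^f=\nabla+fX\cdot$ alone, and neither $\nabla$ nor $fX\cdot$ involves $\tau$ in any way, so there is no $\tau$- or $k_b$-dependent contribution in $R^f$ for the $dk_b$ term to cancel against. If $dk_b\neq0$, then $\bar R^f_{X,Y}-R^f_{X,Y}=-\tfrac12\,dk_b(X,Y)\,\mathrm{Id}$ would genuinely be nonzero. The actual reason this term vanishes is a nontrivial structural fact about Riemannian foliations: the basic component $k_b$ of the mean curvature form is always a \emph{closed} $1$-form, which is a theorem of \'Alvarez L\'opez (the paper cites it as Corollary~3.5 of that reference). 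Using closedness, $\langle Y,\nabla_X\tau\rangle=\langle X,\nabla_Y\tau\rangle$, which together with torsion-freeness of the Bott connection (which you do invoke correctly) kills everything. So your computation is essentially sound up to the final step, but the last cancellation needs to be justified by citing $dk_b=0$, not by the asymmetry/symmetry argument you sketch.

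One minor additional remark: you raise the compatibility rule $\bar\nabla_X(Y\cdot s)=\nabla_XY\cdot s+Y\cdot\bar\nabla_Xs$ as a subtle obstacle, but this is not actually where the difficulty lies. That rule is a formal consequence of $\bar\nabla_X=\nabla_X-\tfrac12\langle X,\tau\rangle\mathrm{Id}$ together with the usual compatibility of $\nabla$, since the scalar correction commutes with Clifford multiplication. The real content of the proposition is the closedness of $k_b$, and your proposal omits it.
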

\begin{proof}Using the standard definition of the curvature operator, we start
with the relation
\begin{equation}
\bar R_{X,Y}^f=\bar \nabla _X^f\bar \nabla _Y^f-\bar \nabla _Y^f\bar \nabla_X^f-\bar \nabla _{[X,Y]}^f.  \label{bar_curvature}
\end{equation}
Furthermore, we get
\begin{eqnarray*}
\bar \nabla _X^f\bar \nabla _Y^f &=&\left( \nabla _X^f-\frac 12\left\langle
X,\tau \right\rangle \right) \left( \nabla _Y^f-\frac 12\left\langle Y,\tau
\right\rangle \right)  \\
&=&\nabla _X^f\nabla _Y^f-\frac 12\left\langle \nabla _XY,\tau \right\rangle
-\frac 12\left\langle Y,\nabla _X\tau \right\rangle  \\
&&-\frac 12\left\langle Y,\tau \right\rangle \nabla _X-\frac 12f\left\langle
Y,\tau \right\rangle X-\frac 12\left\langle X,\tau \right\rangle \nabla _Y \\
&&-\frac 12f\left\langle X,\tau \right\rangle Y+\frac 14\left\langle X,\tau
\right\rangle \left\langle Y,\tau \right\rangle .
\end{eqnarray*}

We also have the corresponding relation for the second term of (\ref{bar_curvature}); for the third term we have
\[
\bar \nabla _{[X,Y]}^f=\nabla _{[X,Y]}^f-\frac 12\left\langle [X,Y],\tau
\right\rangle .
\]

Now, let us emphasize that $k_b$ is a closed 1-form \cite[Corollary 3.5]{Al},
so
\[
\left\langle Y,\nabla _X\tau \right\rangle =\left\langle X,\nabla _Y\tau
\right\rangle .
\]
Also, the Bott connection is torsion-free, so we get
\[
\left\langle \nabla _XY-\nabla _XY-[X,Y],\tau \right\rangle =0.
\]

Summing up, we observe that all terms containing $\tau $ vanish and we obtain
the above relation.
\end{proof}

\begin{remark}
For $f\equiv 0$ we derive that $\bar R_{X,Y}=R_{X,Y}$.
\end{remark}

Now let us study the spin curvature operator applied on $\Gamma_b (S)$. A Riemannian foliation can be locally identified
with a Riemannian submersion \cite{Re}, and we can consider a local \emph{transversal}, i.e. a local transverse base manifold $T$,
so all the geometric objects we use are basic (projectable to the local transversal), and we can locally
identify the transverse spin curvature operator with the spin curvature
operator on the transverse manifold $T$. Then, similar to the classical case \cite[Equation 1.13]{Ba-Fr}, we get the following corresponding local identity for the framework of Riemannian foliations \cite[Equation (4.4)]{Ju1}, which will  be useful in our further considerations.

\begin{equation}
\sum_i e_i\cdot \mathrm{Ric}^{\nabla}(e_i)\cdot s = -\mathrm{Scal}^\nabla s \label{Ricci spin}.
\end{equation}

We also have
\begin{equation}
\sum_i e_i\cdot R_{X,e_i}s =-\frac 12 \mathrm{Ric}^{\nabla}(X)\cdot s, \label{scalar spin}.\\
\end{equation}
In the following let us denote the spectrum of the basic Dirac operator defined on a Riemannian foliations with $\sigma (D_b)=\{\lambda_k\}_{k \ge 1}$, such that $|\lambda_1|\le |\lambda_2| \le ...$ .
\begin{proposition}\label{jj}
On a closed Riemannian foliation of codimension $q \ge 2$ endowed with a transverse spin structure we assume the existence of a basic spinor $s_1\in \Gamma_b (S)$ which verifies the equation
\begin{equation}\label{hijazi}
\bar \nabla _Xs_1+\frac fqX\cdot s_1=0,
\end{equation}
for any $X\in \Gamma_b (Q)$, $f$ being a basic real-valued function. Then $f$ is constant
$f=\pm \lambda _1$, (as a consequence the spinor is transverse $\tau -$Killing), the foliation is transversally Einstein, the transversal scalar curvature $\mathrm{Scal}^\nabla $ is a positive constant function $\mathrm{Scal}^\nabla \equiv
Scal_0^\nabla >0$, and
\[
\lambda^2 _1=\frac 14\frac q{(q-1)} \mathrm{Scal}^{\nabla}_0.
\]
\end{proposition}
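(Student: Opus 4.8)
The plan is to mimic the classical Friedrich/Hijazi argument from the theory of Killing spinors on closed spin manifolds, transferred to the transverse setting via the modified connection $\bar\nabla$. First I would differentiate the hypothesis \eqref{hijazi} once more in a direction $e_i$ and sum after Clifford multiplication by $e_i$, in order to bring in the curvature operator $\bar R^f_{X,e_i}$ of the connection $\bar\nabla^f$. Here the key simplification is the Proposition just proved: $\bar R^f_{X,Y}=R^f_{X,Y}$ on $\Gamma_b(Q)$, so the $\tau$-terms drop out and I may work with $R^f$, whose Ricci-type contraction is controlled by \eqref{scalar spin} together with the standard expansion of $R^f$ in terms of $R$ and $df$. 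Contracting over $X=e_i$ as well should then yield, exactly as in \cite[Equation 1.13]{Ba-Fr}, a Schr\"odinger--Lichnerowicz-type identity relating $D_b^2 s_1$, $\mathrm{Scal}^\nabla s_1$, the function $f$, and $\mathrm{grad}^\nabla f$; invoking \eqref{Ricci spin} to rewrite the curvature term in terms of $\mathrm{Scal}^\nabla$. Since $s_1$ satisfies \eqref{hijazi}, a short computation gives $D_b s_1 = f s_1$ (sum the equation against $e_i\cdot$), so $s_1$ is already an eigenspinor with eigenvalue $f$; one then gets an equation of the form $q(q-1)$ times (something involving $f^2$, $\mathrm{grad}^\nabla f$, and $\mathrm{Scal}^\nabla$) $=0$ pointwise.

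Next I would extract the two conclusions — constancy of $f$ and the transversally Einstein condition — from this identity. The standard trick is to take the pointwise norm: write the Penrose/twistor operator $P_X s := \bar\nabla_X s + \tfrac1q X\cdot D_b s$ and compute $\sum_i(P_{e_i}s_1\mid P_{e_i}s_1)\ge 0$, or equivalently compare $|\bar\nabla^f s_1|^2$-type quantities; but because $\bar\nabla$ is not metric (as the remark warns), I cannot integrate $|\bar\nabla^f s_1|^2$ naively. Instead I would use that \eqref{hijazi} forces $\bar\nabla^f_X s_1 = 0$ for all $X\in\Gamma_b(Q)$, i.e. $s_1$ is $\bar\nabla^f$-parallel along $Q$, hence its $\bar R^f$-curvature annihilates it: $\bar R^f_{X,Y}s_1 = 0$, and then $R^f_{X,Y}s_1 = 0$. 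Feeding this into \eqref{scalar spin} applied to $R^f$ (again expanding $R^f$ via $R$ and the $X(f)$-terms) should directly give $\mathrm{Ric}^\nabla(X)\cdot s_1 = \mu X\cdot s_1$ for a function $\mu$, which is the transversally Einstein condition $\mathrm{Ric}^\nabla = \mu\cdot\mathrm{id}$ (using that Clifford multiplication by $1$-forms is injective on spinors in codimension $q\ge 2$), and simultaneously force $X(f)=0$ for all basic $X$, so $f$ is constant on each leaf closure; since $f$ is basic and $M$ is closed and connected (transversally), $f$ is a genuine constant. Being an eigenvalue of $D_b$ with $|f|$ minimal among those produced this way, and using the lower bound $\lambda^2\ge \tfrac14\tfrac{q}{q-1}\mathrm{Scal}^\nabla_0$ quoted in Section~\ref{geometric objects}, one gets $f=\pm\lambda_1$ and that the bound is saturated, which in turn pins down $\mathrm{Scal}^\nabla\equiv Scal^\nabla_0$ as the constant $4\tfrac{q-1}{q}\lambda_1^2$, in particular positive; tracing the Einstein relation then gives $\mu$ in terms of $Scal^\nabla_0$.

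The delicate point — and the main obstacle — is precisely the failure of $\bar\nabla$ to be metric: several steps in the classical proof (e.g. deducing $f=\mathrm{const}$ by integrating a nonnegative quantity, or identifying $\lambda_1$ with the minimum of $|f|$) rely on $L^2$-inner-product manipulations that must here be rerouted through the genuinely metric connection $\nabla$ and the self-adjointness of $D_b$ on $\Gamma_b(S)$, keeping careful track of the mean-curvature term $-\tfrac12\tau$ and of integration against the basic transverse volume form (where the divergence of $\tau$ enters). I expect the $\tau$-bookkeeping to be where the computation is most error-prone, but the preceding Proposition is designed exactly to make the curvature-level identities clean, so the argument should go through once one consistently works with $\bar\nabla^f$-parallelism of $s_1$ along $Q$ rather than with norms of derivatives. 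A final remark: the positivity of $\mathrm{Scal}^\nabla_0$ also follows independently from the mere existence of the eigenspinor together with $q\ge 2$, so there is some redundancy one can exploit to cross-check the constants.
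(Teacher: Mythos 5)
Your proposal follows essentially the same route as the paper: from \eqref{hijazi} you observe that $s_1$ is $\bar\nabla^{f/q}$-parallel along $Q$, so $\sum_i e_i\cdot\bar R^{f/q}_{X,e_i}s_1 = 0$, replace $\bar R^{f/q}$ by $R^{f/q}$ via the preceding Proposition, expand the curvature, and arrive at the pointwise Hijazi-type identity from which constancy of $f$, the transversally Einstein condition, and the formula for $\lambda_1^2$ are read off. Your mid-proof detour about $L^2$ manipulations being blocked by the non-metric connection is a false start that you correctly abandon — the argument is purely local, and the paper simply cites \cite{Ju1} for the final algebraic extraction that you sketch directly.
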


\begin{proof}From (\ref{hijazi}) we have
\begin{eqnarray*}
0 &=&\sum_ie_i\cdot \bar R_{X,e_i}^{\frac{f}{q}}s_1 \\
\  &=&\sum_ie_i\cdot R_{X,e_i}^{\frac{f}{q}}s_1.
\end{eqnarray*}

On the other side
\[
R_{X,e_i}^{\frac fq}s_1=\nabla _X^{\frac fq}\nabla _{e_i}^{\frac fq}
-\nabla_{e_i}^{\frac fq}\nabla _X^{\frac fq}-\nabla _{\pi _Q([X,e_i])}^{\frac fq}s_1.
\]
As $s\in \Gamma _b\left( S\right) $ is a basic spinor parallel in the
leafwise direction, we can adjust accordingly the last term in the
expression of curvature operator. Consequently
\begin{eqnarray*}
R_{X,e_i}^{\frac fq}s_1 &=&R_{X,e_i}s_1 \\
&&+\frac fqX\cdot \nabla_{e_i}s_1+X\left( \frac fq\right) e_i\cdot s_1+\frac fq\nabla _Xe_i\cdot
s_1+\frac fqe_i\cdot \nabla _Xs_1+\frac{f^2}{q^2}X\cdot e_i\cdot s_1 \\
&&-\frac fqe_i\cdot \nabla _Xs_1-e_i\left( \frac fq\right) X\cdot s_1-\frac
fq\nabla _{e_i}X\cdot s_1-\frac fqX\cdot \nabla _{e_i}s_1-\frac{f^2}{q^2}
e_i\cdot X\cdot s_1 \\
&&-\frac fq\pi _Q([X,e_i])\cdot s_1 \\
&=&R_{X,e_i}s_1+X\left( \frac fq\right) e_i\cdot s_1-e_i\left( \frac
fq\right) X\cdot s_1+\frac{f^2}{q^2}X\cdot e_i\cdot s_1-\frac{f^2}{q^2}
e_i\cdot X\cdot s_1 \\
&&+\frac fq\left( \nabla _Xe_i-\nabla _{e_i}X-\pi _Q([X,e_i])\right) \cdot
s_1.
\end{eqnarray*}

We notice that the last term vanishes, being the torsion of the Bott connection
\cite[Proposition 3.8]{To}. Then
\begin{eqnarray*}
\sum_ie_i\cdot R_{X,e_i}^{\frac fq}s_1 &=&\sum_ie_i\cdot
R_{X,e_i}s_1-qX\left( \frac fq\right) s_1-\sum_ie_i\left( \frac fq\right)
e_i\cdot X\cdot s_1 \\
&&+\frac{f^2}{q^2}\sum_ie_i\cdot X\cdot e_i\cdot s_1+q\frac{f^2}{q^2}X\cdot
s_1.
\end{eqnarray*}

Let us notice that
\[
\sum_ie_i\left( \frac fq\right) e_i=\mathrm{grad}^\nabla \left( \frac fq\right) ,
\]
and
\begin{eqnarray*}
\sum_ie_i\cdot X\cdot e_i &=&\sum_i-2\left\langle e_i,X\right\rangle
e_i-\sum_iX\cdot e_i\cdot e_i \\
&=&-2X+qX \\
&=&\left( q-2\right) X.
\end{eqnarray*}
Finally, using also \ref{Ricci spin}, we end up with the relation \cite{Ju1} (for the classical framework of a closed
Riemannian manifold see \cite[Equation 3.5]{Hij})
\[
-\frac 12\sum_iR_{X,e_i}e_i\cdot s_1-qX(\frac fq)s_1- \mathrm{grad}^{\nabla} (\frac fq)X\cdot
s_1+\frac {f^2}{q^2}(q-1)X\cdot s_1=0,
\]
Now, the conclusion comes from \cite{Ju1}.
\end{proof}

So, as a remark, the above fundamental properties of Killing spinor still
hold for our particular definition.

We present in the following the main result of our paper.
\begin{theorem}
Let us consider on a closed manifold a Riemannian foliation with a transverse spin structure; assume also that the mean curvature form is non-necessarily basic. Then the lower bound of the eigenvalues of the basic Dirac operator is attained
\[
\lambda_1^2=\frac 14\frac q{q-1}\mathrm{Scal}^{\nabla}_0,
\]
if and only if there exist a transverse $\tau -$ Killing spinor associated to one of the real numbers $+\frac 1q \sqrt{\frac 14\frac q{q-1}\mathrm{Scal}^{\nabla}_0}$ or $-\frac 1q \sqrt{\frac 14\frac q{q-1}\mathrm{Scal}^{\nabla}_0}$.
\end{theorem}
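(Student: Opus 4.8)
The plan is to establish the two implications separately. For the direction ``a transverse $\tau$-Killing spinor exists $\Rightarrow$ the bound is attained'', suppose $s_{1}\in\Gamma_{b}(S)$ satisfies $\bar\nabla_{X}s_{1}+\frac{\lambda}{q}X\cdot s_{1}=0$ for all $X\in\Gamma_{b}(Q)$, where $\lambda=\pm\sqrt{\frac14\frac{q}{q-1}\mathrm{Scal}^{\nabla}_{0}}$. Applying Proposition \ref{jj} with $f:=\lambda$, a constant (hence basic) real-valued function, gives at once $\lambda=\pm\lambda_{1}$ together with $\lambda_{1}^{2}=\frac14\frac{q}{q-1}\mathrm{Scal}^{\nabla}_{0}$; since $\frac14\frac{q}{q-1}\mathrm{Scal}^{\nabla}_{0}$ is a lower bound for $\lambda_{1}^{2}$ by the Friedrich-type estimate recalled in Section \ref{geometric objects}, this is precisely the statement that the bound is attained. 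One also records here, by the Clifford computation $D_{b}s_{1}=\sum_{i}e_{i}\cdot\bar\nabla_{e_{i}}s_{1}=-\frac{\lambda}{q}\sum_{i}e_{i}\cdot e_{i}\cdot s_{1}=\lambda s_{1}$, that $s_{1}$ is the eigenspinor attached to the limiting eigenvalue.

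For the converse, assume $\lambda_{1}^{2}=\frac14\frac{q}{q-1}\mathrm{Scal}^{\nabla}_{0}$. First I would deform the bundle-like metric: by the theorems of Dom\'inguez \cite{Do} and Mason \cite{Mas} it can be transformed, keeping the transverse metric on the normal bundle $Q\simeq TM/T\mathcal F$ intact, into a bundle-like metric $\hat g$ with basic-harmonic mean curvature. Since the transverse metric on $TM/T\mathcal F$ is unchanged, so are the transverse Levi-Civita connection, the induced spin connection on $S$, the space $\Gamma_{b}(S)$, and all transversal curvature invariants; in particular $\widehat{\mathrm{Scal}}^{\nabla}_{0}=\mathrm{Scal}^{\nabla}_{0}$, while by the Habib--Richardson spectral rigidity theorem \cite{Hab-Ric} one has $\hat\lambda_{1}=\lambda_{1}$. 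Hence the Friedrich bound is again attained for the basic-harmonic foliation $(M,\mathcal F,\hat g)$, so the equality case of the corresponding estimate (see \cite{Ju1, Hab-Ric}) furnishes a basic spinor $\hat s_{1}\in\Gamma_{b}(S)$ with $\hat{\bar\nabla}_{X}\hat s_{1}+\frac{\hat\lambda_{1}}{q}X\cdot\hat s_{1}=0$, that is, a $\hat\tau$-Killing spinor, since in the basic-harmonic setting the two notions coincide.

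It remains to carry $\hat s_{1}$ back to the original metric, and this is the step I expect to be the main obstacle, since eigenspinors are not preserved by these deformations (as emphasized in Section \ref{geometric objects}). The change of transversal subbundle affects neither the induced spin connection nor $\Gamma_{b}(S)$, so $\bar\nabla$ is modified only through the change of the basic mean curvature form. Writing the basic Hodge decomposition of the closed $1$-form $k_{b}$ as $k_{b}=\hat k_{b}+d_{b}\phi$ with $\hat k_{b}$ basic-harmonic and $\phi$ a basic function (cf.\ \cite{Al}), and using that the transverse metric on $Q$ is unchanged, one obtains $\langle X,\tau\rangle-\langle X,\hat\tau\rangle=(d_{b}\phi)(X)=X(\phi)$ for $X\in\Gamma_{b}(Q)$, hence on $\Gamma_{b}(S)$ the identity $\hat{\bar\nabla}_{X}=\bar\nabla_{X}+\tfrac12 X(\phi)$. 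Setting $s_{1}:=e^{\phi/2}\hat s_{1}$, which again lies in $\Gamma_{b}(S)$ because $\phi$ is basic, and using $\hat\lambda_{1}=\lambda_{1}$, one computes
\begin{align*}
\bar\nabla_{X}s_{1}+\frac{\lambda_{1}}{q}X\cdot s_{1}
&=e^{\phi/2}\Bigl(\bar\nabla_{X}\hat s_{1}+\tfrac12 X(\phi)\,\hat s_{1}+\frac{\hat\lambda_{1}}{q}X\cdot\hat s_{1}\Bigr)\\
&=e^{\phi/2}\Bigl(\hat{\bar\nabla}_{X}\hat s_{1}+\frac{\hat\lambda_{1}}{q}X\cdot\hat s_{1}\Bigr)=0,
\end{align*}
so $s_{1}$ is a transverse $\tau$-Killing spinor on $(M,\mathcal F,g)$ associated with $\pm\frac1q\sqrt{\frac14\frac{q}{q-1}\mathrm{Scal}^{\nabla}_{0}}$, which finishes the proof. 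The delicate points are to pin down the precise effect of each elementary metric change (the subbundle change, and the leafwise conformal changes of Dom\'inguez and Mason) on the induced spin connection and on $\tau$, and to check that $e^{\phi/2}$ is exactly the conformal factor intertwining the two Killing-type equations; the resulting fact that $s_{1}$ has non-constant length, in contrast with $\hat s_{1}$, reflects the non-metric nature of $\bar\nabla$ and is consistent with the remark following Definition \ref{twistor spinor}.
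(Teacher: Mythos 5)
Both directions of your argument are correct, but your converse follows a genuinely different route from the paper, and the comparison is instructive.

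\textbf{Forward direction.} You invoke Proposition \ref{jj}, which is correct but somewhat heavier than necessary; the paper simply computes $D_b s_1 = \sum_i e_i\cdot\bar\nabla_{e_i}s_1 = \pm\sqrt{\frac14\frac{q}{q-1}\mathrm{Scal}^\nabla_0}\,s_1$ directly and then cites the Friedrich-type lower bound from \cite{Hab-Ric,Ju1} to identify this eigenvalue as $\lambda_1$. You record the same Clifford computation anyway, so there is no real discrepancy.

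\textbf{Converse direction.} Here the routes differ. You deform the metric all the way to \emph{basic-harmonic} mean curvature (Dom\'inguez followed by Mason), obtain the Killing spinor there, and carry it back by the conformal rescaling $s_1 = e^{\phi/2}\hat s_1$ with $k_b = \hat k_b + d_b\phi$. The paper stops at Dom\'inguez, i.e.\ at merely \emph{basic} mean curvature: the Lichnerowicz formula (\ref{new_weitz_lich_integral_spin}) from \cite{Sl} is already valid at that level and carries no extra mean curvature term, so the limiting case directly yields $\bar\nabla^{\lambda_1/q}s_1=0$; and since Dom\'inguez's change leaves $k_b$ (hence $\tau$ and the action of $\bar\nabla$ on $\Gamma_b(S)$) untouched, the \emph{same} spinor is $\tau$-Killing for the original metric — no rescaling needed. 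Your $e^{\phi/2}$-rescaling is a nice computation and makes very explicit why the resulting Killing spinor has non-constant length (the non-metric nature of $\bar\nabla$), but it imports Mason's theorem and the Hodge decomposition of $[k_b]$, both of which the paper avoids by exploiting the validity of Slesar's Weitzenb\"ock identity at the basic (not basic-harmonic) level. That observation is the key simplification you are missing.

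One imprecision worth fixing: you write that the equality case in the basic-harmonic setting produces $\hat{\bar\nabla}_X\hat s_1 + \frac{\hat\lambda_1}{q}X\cdot\hat s_1 = 0$ ``since in the basic-harmonic setting the two notions coincide''. They do \emph{not} coincide for a general basic-harmonic foliation; $\bar\nabla = \nabla$ requires $\hat\tau = 0$. To justify the claim you would need to first invoke Jung's result that the equality case of his estimate forces $\hat k\equiv 0$ (minimality), at which point the notions trivially agree; alternatively, apply Slesar's Lichnerowicz formula from the start, which twists by $\bar\nabla$ and hands you the $\hat\tau$-Killing equation directly. Either repair closes the gap, but as written the step is a non-sequitur.
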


\begin{proof}: If we assume the existence of a transversal Killing spinor $s_1$
associated with $\bar \nabla $, then
\begin{eqnarray*}
D_bs_1 &=&\sum_ie_i\cdot \bar \nabla _{e_i}s_1 \\
&=&\mp\left( \sum_ie_i\cdot \frac 1q \sqrt{\frac 14\frac q{q-1}\mathrm{Scal}^{\nabla}_0} \,e_i\right) s_1 \\
&=&\pm \sqrt{\frac 14\frac q{q-1}\mathrm{Scal}^{\nabla}_0} \,s_1,
\end{eqnarray*}
so the lower bound estimate is attained, in accordance with \cite{Ju1,Hab-Ric}.

For the converse statement, as we use a Lichnerowicz type formula, let us observe that the basic
mean curvature vector field $\tau$ that appears in the definition of the basic Dirac operator is actually
obtained via the non-trivial Hodge type decomposition (\ref{hodge decomposition}) (see also \cite[Theorem 2.1]{Al}).
Then, for arbitrary Riemannian foliations is difficult to express $\tau$ and use it
in the standard calculation in order to get a Lichnerowicz formula (for an example a little bit more complicated than we presented in Section 2 see \cite[Example 2]{Hab-Ric}; compare also with the case of transverse Dirac operator,
for instance \cite[Theorem 4]{Ko}). In turn, we change the metric into a new bundle-like metric with basic mean curvature,
we obtain the result in this particular setting, then we pull-back the result in the general case, basically using the method from \cite{Sl}.
Consequently, let us consider a deformation of the metric as in
\cite{Do}. If the lower bound estimate of the spectrum is realized for the initial metric,
using the spectral rigidity result \cite{Hab-Ric}, the lower bound
estimate will be attainted also for the case of basic mean curvature form. In
this particular setting $\tau \equiv k^{\sharp}$ and we have the Lichnerowicz type formula \cite{Sl}

\begin{equation}
\left\| D_b s_1\right\| ^2=\left\| \bar \nabla s_1\right\| ^2+ \frac 14
\int\limits_M \mathrm{Scal}^{\nabla}\left| s_1 \right|^2,
\label{new_weitz_lich_integral_spin}
\end{equation}
for $s_1 \in \Gamma_b (S)$, where $\left| s_1\right| ^2=\left( s \mid s_1 \right)$, $\left\| \cdot \right\|$ being the $L^2$ norm associated with the hermitian structure.

In what follows, let us consider the connection
\[
\bar \nabla _X^{\frac{\lambda _1}q}:=\bar \nabla _X+\frac{\lambda _1}q X\cdot.
\]
when $\lambda_1$ is the eigenvalue for which the lower bound is attained, $s_1$ being the corresponding eigenspinor.
In the calculations below we follow closely in our particular setting the main steps from the original approach of Friedrich \cite{Fr} in order to get the lower bound estimate of the Dirac spectrum. Comparing with the corresponding approach of Jung
on Riemannian foliations \cite[Theorem 4.2]{Ju1}, as we use a different Lichnerowicz formula, let us notice the vanishing of the mean curvature term in the final estimate. Using then standard arguments, in the limiting case we
obtain necessary conditions for the eigenspinors.

Using (\ref{new_weitz_lich_integral_spin}), after calculations we get \cite{Sl}
\[
\int\limits_M\left| D_bs_1-\frac{\lambda _1}qs_1\right|
^2=\int\limits_M\sum_i\left| \bar \nabla _{e_i}^{\frac{\lambda _1}q}s_1\right|
^2+\frac 14\int\limits_M \mathrm{Scal}^\nabla \left| s_1\right| ^2+\int\limits_M\left(
1-q\right) \frac{\lambda _1^2}{q^2}\left| s_1\right| ^2,
\]
and, consequently \cite{Sl}
\[
\int\limits_M\left( \frac{q-1}q\lambda _1^2-\frac 14Scal^{\nabla}\right) \left|
s_1\right| ^2=\int\limits_M\sum_i\left| \bar \nabla _{e_i}^{\frac{\lambda _1}q}s_1\right|^2.
\]
From here, it turns out that
\[
\int\limits_M\sum_i\left| \bar \nabla _{e_i}^{\frac{\lambda _1}q}s_1\right|^2=0\,\,\mbox{and}\,\, \lambda _1^2=\frac 14\frac q{q-1}\mathrm{Scal}_x^\nabla,
\]
so
\[
\bar \nabla _Xs_1+\frac{\lambda _1}qX\cdot s_1=0,
\]
for any $X\in \Gamma \left( Q\right) $, i.e. the spinor $s$ is a transversal
$\tau -$Killing spinor.

As we noticed above, the metric change described in \cite{Do}
leaves the transverse metric and the basic part $k_b$ of the mean curvature
intact, so the action of the modified connection $\bar \nabla $ on
$\Gamma_b\left( S\right) $ does not change. On the other side the Clifford multiplications,
related to the transverse metric, always agree. As a consequence, the spinor $s$ will
be a transversal Killing spinor associated with $\bar \nabla $ for the
initial metric tensor, and we extend our result to arbitrary Riemannian
foliations using \cite{Do}.
\end{proof}

\begin{proposition}
If a Riemannian foliation defined on a closed manifold with $q \ge 2$ admits a transversal Killing spinor with respect to
the connection $\bar \nabla $, then the foliation is taut.
\end{proposition}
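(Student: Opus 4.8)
The plan is to deduce tautness of $\mathcal{F}$ from the vanishing of the basic cohomology class of $k_b$. Recall that $k_b$ is a closed basic $1$-form \cite[Corollary 3.5]{Al}, and that by the criterion of \'Alvarez L\'opez \cite{Al} a Riemannian foliation on a closed manifold is taut if and only if its \'Alvarez class $[k_b]\in H_b^1(M)$ vanishes; so it suffices to produce a basic function $h$ with $k_b=d_b h$.

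Let $s\in\Gamma_b(S)$, $s\not\equiv 0$, be a transversal Killing spinor with respect to $\bar\nabla$, so that $\bar\nabla_X s+\frac{\lambda}{q}X\cdot s=0$ for all $X\in\Gamma_b(Q)$, with $\lambda$ a constant. First I would observe that $\lambda\in\mathbb{R}$: since $\sum_i e_i\cdot e_i=-q$, the Killing equation gives $D_b s=\sum_i e_i\cdot\bar\nabla_{e_i}s=\lambda s$, and $D_b$ is symmetric on $\Gamma_b(S)$, so $\lambda=\bar\lambda$. Unwinding the definition (\ref{modified connection}) of $\bar\nabla$, we get $\nabla_X s=\frac12\langle X,\tau\rangle s-\frac{\lambda}{q}X\cdot s$ for $X\in\Gamma_b(Q)$, while $\nabla_U s=0$ for leafwise $U$ because $s$ is basic. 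The key step is then to differentiate the length function $\varphi:=|s|^2=(s\mid s)$ using that $\nabla$ (unlike $\bar\nabla$) is a metric connection: for $X\in\Gamma_b(Q)$,
\[
X\varphi=2\,\mathrm{Re}(\nabla_X s\mid s)=\langle X,\tau\rangle\varphi-\frac{2\lambda}{q}\,\mathrm{Re}(X\cdot s\mid s)=\langle X,\tau\rangle\varphi,
\]
the last term vanishing because $(X\cdot s\mid s)=-(s\mid X\cdot s)$ is purely imaginary while $\lambda$ is real; and $U\varphi=0$ for leafwise $U$, since $\iota_U k_b=0$. As $\tau=k_b^\sharp$, this says precisely that $\varphi$ is a basic function satisfying $d\varphi=\varphi\,k_b$.

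It remains to check that $\varphi>0$ everywhere, and this is the step I expect to be the main (albeit minor) obstacle, since $\bar\nabla$ is not metric and $\varphi$ is genuinely non-constant in general (cf.\ the remark following Definition \ref{twistor spinor} and the torus example in Section \ref{geometric objects}). But along any smooth curve $\gamma$ the identity $d\varphi=\varphi\,k_b$ becomes the linear homogeneous ODE $(\varphi\circ\gamma)'=k_b(\dot\gamma)\,(\varphi\circ\gamma)$, so if $\varphi$ vanished at one point it would vanish along every curve issuing from it, hence identically on the connected manifold $M$, forcing $s\equiv 0$ by positive-definiteness of $(\cdot\mid\cdot)$ --- a contradiction. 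Therefore $\varphi>0$ on $M$, the basic function $h:=\log\varphi$ is well defined, and $k_b=\varphi^{-1}d\varphi=d_b h$. Hence $[k_b]=0$ in $H_b^1(M)$ and, by \cite{Al}, $\mathcal{F}$ is taut.
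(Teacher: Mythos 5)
Your proof is correct, and it takes a genuinely different, more elementary route than the paper's. The paper argues spectrally: it invokes Proposition~\ref{jj} to see that the lower bound $\lambda_1^2=\frac{1}{4}\frac{q}{q-1}\mathrm{Scal}^{\nabla}_0$ is realized with constant positive transversal scalar curvature, then uses the spectral rigidity of \cite{Hab-Ric} together with the Dom\'inguez and Mason metric deformations to pass to a bundle-like metric with basic-harmonic mean curvature, and finally cites \cite[Theorem~5.2]{Ju1} to conclude that the deformed foliation is minimal, hence taut. You instead work directly from the $\tau$-Killing equation and the non-metricity of $\bar\nabla$: since $\lambda\in\mathbb{R}$ (by symmetry of $D_b$) and $\left(X\cdot s\mid s\right)$ is purely imaginary, unwinding (\ref{modified connection}) gives $d\bigl(|s|^2\bigr)=|s|^2\,k_b$, the linear ODE argument shows $|s|^2>0$ for a nontrivial $\tau$-Killing spinor, so $k_b=d_b\log|s|^2$ is $d_b$-exact and the \'Alvarez class $[k_b]$ vanishes, which is equivalent to tautness by \cite{Al}. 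Your route avoids the Lichnerowicz estimate and the metric deformations entirely, never uses the hypothesis $q\ge 2$ (which the paper needs twice, in Proposition~\ref{jj} and for Jung's minimality theorem), and isolates the geometric mechanism: the squared length of a $\tau$-Killing spinor is an integrating factor for $k_b$, precisely because $\bar\nabla$ differs from the metric connection $\nabla$ by $-\frac{1}{2}k_b$. Two points worth stating explicitly in your write-up, both of which the paper also assumes tacitly: $M$ should be connected (needed for the open-and-closed argument showing $\{|s|^2>0\}$ is all of $M$), and ``admits a transversal Killing spinor'' means a nontrivial one.
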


\begin{proof}Assuming that the Riemannian foliations admit a transversal Killing
spinor with respect to the connection $\bar \nabla $, we obtain that there
is an eigenspinor $s_1$ for which the lower bound estimate of the spectrum
of the basic Dirac operator is attained; from the Proposition \ref{jj} we also
obtain that in the case $q\ge 2$ the foliation should be transversally Einstein, and the
transversal scalar curvature must be non-negative and constant.
Using now once again the rigidity of the spectrum $\sigma (D_b)$, we obtain that
using a sequence of metric changes that hold the transverse part of the
metric we end up with a Riemannian foliation with a basic-harmonic mean
curvature for which the above lower bound is also realized and the square of the first
eigenvalue is again $\frac 14\frac q{q-1}\mathrm{Scal}^{\nabla}_0$. As $q \ge 2$, from \cite[Theorem 5.2]{Ju1} we get
that with respect to the deformed metric the foliation should be minimal,
i.e. $k\equiv 0$.
As a consequence, by the above sequences of metric changes
we obtain a Riemannian foliation with minimal leaves; as this is the
standard definition for a taut foliation (see e. g. \cite{Al}), the conclusion follows.
\end{proof}

\begin{remark}
For foliations with basic-harmonic mean curvature the existence of a
classical transversal Killing spinor is restricted by the condition $k\equiv 0$ \cite{Ju1}; as a
result $\bar \nabla \equiv \nabla $, and a spinors is transversally Killing
with respect to $\bar \nabla $ and $\nabla $ in the same time. Consequently,
the above results are natural generalization of \cite[Theorem 5.3]{Ju1} for basic spinors
and \cite[Corollary 4.5]{Ju2}.

\end{remark}

\section{Transversal $\tau -$ twistor spinors}

Within this section, in the same framework of Riemannian foliations with non-necessarily basic-harmonic mean curvature, using our previous method we study the main properties of the transversal $\tau -$ twistor spinors introduced in Section 2.

First of all we see that this concept is in fact an extension of \cite{Ju2}; there the twistor spinors are defined in the particular case of Riemannian foliations with basic mean curvature as basic spinors satisfying the classical equation written using the connection $\nabla$ on the spinor bundle $S$ and the basic Dirac operator
\[
\nabla_X s +\frac 1q X \cdot D_b s=0
\]
for any $X \in \Gamma_b(Q)$. Indeed, by \cite[Theorem 3.2]{Ju2}, the twistor spinors exist only on minimal foliations. As for minimal foliations $\tau =0$ and $\nabla \equiv \bar{\nabla}$, we see that our Definition \ref{twistor spinor} agrees with \cite{Ju2} for this particular class of Riemannian foliations. As the parallel spinors with respect to the connection $\bar{\nabla}$ considered in the Section \ref{geometric objects} are obviously $\tau-$ twistor spinor defined on a taut, non-minimal foliation,
the Definition \ref{twistor spinor} is in fact an extension of \cite{Ju2}; however, the fact that the $\tau -$twistor spinors exist only on taut foliation (as well as $\tau-$Killing spinors) does not seem to be a direct consequence.

In the calculations below let us assume $q>2$. We use the relations (\ref{scalar spin}), (\ref{Ricci spin}), the fact that Bott
connection has vanishing torsion and is compatible with the modified connection $\bar{\nabla}$, as well as the fact that the basic spinors are parallel on the leafwise directions.

\begin{eqnarray*}
\frac 12 \mathrm{Ric}^{\nabla}(X)\cdot s &=&-\sum_ie_i\cdot R_{X,e_j}s \\
&=&-\sum_ie_i\cdot \bar R_{X,e_j}s \\
&=&-\sum_ie_i\cdot \left( \bar \nabla _X\bar \nabla _{e_i}-\bar \nabla_{e_i}\bar \nabla _X-\bar \nabla _{\pi _Q([X,e_i])}\right) s. \\
&=&-\sum_ie_i \left( \cdot \bar \nabla _X\left( -\frac 1qe_i\cdot D_bs\right) -\bar
\nabla_{e_i}\left( -\frac 1qX\cdot D_bs\right) \right. \\
&& \left. -\frac 1q \pi_Q([X,e_i])\cdot D_bs \right ) \\
&=&-\sum_ie_i\cdot \left( -\frac 1qe_i\cdot \bar \nabla _XD_bs+\frac 1q X \cdot \bar \nabla_{e_i} D_bs\right)  \\
&&+\frac 1q\sum_ie_i\cdot \left( \nabla _Xe_i-\nabla _{e_i}X-\pi_Q([X,e_i])\right) \cdot D_bs.
\end{eqnarray*}

Now, as the transverse part of the torsion tensor associated to the Bott
connection vanishes, using standard computation (see e. g. \cite[Appendix A]{Gin}), we
get
\[
\bar \nabla _XD_bs=-\frac q{2\left( q-2\right) }\mathrm{Ric}^{\nabla}(X)\cdot s+\frac
1{q-2}X\cdot D_b^2s.  \nonumber
\]

From here
\begin{equation}
D_b^2s=\frac q{4(q-1)}\mathrm{Scal}^\nabla s,  \label{twist_1}
\end{equation}
and we obtain the corresponding upgrading
\begin{equation}
\bar \nabla _XD_bs=\frac q{\left( q-2\right) }\left( \frac {-\mathrm{Ric}^{\nabla}(X)}{2}\cdot s+\frac {1}
{4(q-1)}\mathrm{Scal}^\nabla X\cdot s\right).  \label{twist_2}
\end{equation}

\begin{remark}
The relations (\ref{twist_1}) and (\ref{twist_2}) represent the corresponding version of some classical results concerning
twistor spinors in the classical setting of Riemannian manifolds \cite[Appendix A]{Gin}. Restricting the framework to Riemannian foliations with basic mean curvature, it is also possible to obtain the corresponding results written using only the spin connection $\nabla$ (associated to the so called $W-$ twistor spinors) see \cite[Proposition 3.4.]{Ju2}.
\end{remark}

In the last part of the paper we prove an interesting
property of the zeros of twistor spinors.

First of all let us consider the bundle $E:=S\bigoplus S$, endowed with the
connection
\[
\nabla _X^E\left(
\begin{array}{c}
s_1 \\
s_2
\end{array}
\right) :=\left(
\begin{array}{ccc}
\bar \nabla _{\pi _Q(X)} & \,\, & \frac 1q\pi _Q(X)\cdot  \\
& \,\, &  \\
\frac q{\left( q-2\right) }\left( \mathrm{Ric}^\nabla (X)+\frac 4{q-1}%
\mathrm{Scal}^\nabla X\right) \cdot  & \,\, & \bar \nabla _{\pi _Q(X)}
\end{array}
\right) \left(
\begin{array}{c}
s_1 \\
s_2
\end{array}
\right) ,
\]
for any $s_{1}$, $s_2\in \Gamma (S)$,  as a generalization of the
basic-harmonic case \cite{Ju2} (for the classical case see e.g. \cite{Ba-Fr}).
It is easy to see that if $s$ is a twistor spinor, then the smooth
section $\left(
\begin{array}{c}
s \\
D_bs
\end{array}
\right) $ of $E$ is in fact parallel with respect to $\nabla ^E$, as a
consequence of (\ref{twist_eq}), (\ref{twist_2}) and the definition of $\nabla ^E$.
Considering arguments similar to \cite{Ba-Fr} for the transverse
directions, as the spinors $s$ and $D_bs$ are basic spinors, parallel along
the leaves, defined by (\ref{basic_spinor}), we see that if the manifold $M$ is connected and at a point $x\in M$
we have $s_x=\left(D_bs\right)_x=0$, then $s\equiv 0$ all over the compact manifold
$M$.

For any basic function $f$ we define the \emph{basic Hessian}
associated to the connection $\nabla $
\begin{equation}
\mathrm{Hess}^\nabla (f)(X,Y):=X\left( Y\left( f\right) \right) -\nabla _XY\left(
f\right)   \label{basic_hesse}
\end{equation}
for any $X$, $Y\in \Gamma_b (Q)$.

\begin{remark}
As above, if $T$ is a local transverse manifold, as basic geometric objects in our
framework can be locally projected on $T$, it is easy to see that $\mathrm{Hess}^\nabla $ is just the standard Hessian on the transverse manifold.
\end{remark}

We are now able to prove the corresponding version of a classical property of the zeros of a twistor
spinor \cite{Ba-Fr,Gin}.

\begin{proposition}
On a connected Riemannian foliation of arbitrary codimension $q$ endowed with a nontrivial transversal $\tau$-twistor spinor $s$, the leaves where $s$ vanishes are isolated on the quotient set $M_{/\mathcal{F}}$.
\end{proposition}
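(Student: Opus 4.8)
The plan is to transplant the classical argument for the zeros of a twistor spinor to a local transversal, exploiting that every object in sight is basic. Since $s\in\Gamma_b(S)$, its zero set $Z(s)$ is $\mathcal{F}$-saturated, hence a union of leaves, and the assertion is local: fixing a leaf $L_0\subset Z(s)$, a point $x_0\in L_0$ and a local transversal $T$ through $x_0$, it suffices to prove that $x_0$, viewed as a point of $T$, is an isolated zero of $s$ on $T$. On $T$ the transverse metric, the Bott connection $\nabla$, the vector field $\tau$, the Clifford action and $D_b$ restrict to the genuine corresponding objects of the Riemannian manifold $T$, and by the remark above $\mathrm{Hess}^\nabla$ is the ordinary Hessian of $T$.

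First I would work with the basic function $u:=|s|^2=(s\mid s)$. Because the canonical connection $\nabla$ on $S$ is metric, $X(u)=2\,\mathrm{Re}(\nabla_Xs\mid s)$ for $X\in\Gamma_b(Q)$; rewriting $\nabla_Xs=\bar\nabla_Xs+\frac12\langle X,\tau\rangle s$ and inserting the twistor equation (\ref{twist_eq}) gives
\[
X(u)=-\frac2q\,\mathrm{Re}(X\cdot D_bs\mid s)+\langle X,\tau\rangle\,u .
\]
At $x_0$ one has $s(x_0)=0$, hence $u(x_0)=0$ and $(du)_{x_0}=0$, so $\mathrm{Hess}^\nabla(u)(X,Y)_{x_0}=X(Y(u))_{x_0}$, which is symmetric in $X,Y$. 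Differentiating $Y(u)$ once more, every summand still carrying an undifferentiated $s$ or $du$ drops out at $x_0$; using $\nabla_Xs(x_0)=\bar\nabla_Xs(x_0)=-\frac1q X\cdot D_bs(x_0)$ and the Clifford identity $\mathrm{Re}(Y\cdot\varphi\mid X\cdot\varphi)=\langle X,Y\rangle\,|\varphi|^2$, the only surviving term yields
\[
\mathrm{Hess}^\nabla(u)(X,Y)_{x_0}=\frac{2}{q^2}\,\langle X,Y\rangle\,\bigl|D_bs(x_0)\bigr|^2 .
\]
Thus the non-metric, $\tau$-dependent corrections disappear exactly at a zero of $s$.

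To conclude, I would invoke the $\nabla^E$-parallel section $(s,D_bs)$ of $E=S\oplus S$ constructed above: on the connected manifold $M$ it vanishes at a point only if it vanishes identically, so $s(x_0)=0$ together with $s\not\equiv0$ forces $D_bs(x_0)\neq0$. Hence $\mathrm{Hess}^\nabla(u)_{x_0}=\frac{2}{q^2}|D_bs(x_0)|^2\,g$ is positive definite, $u$ has a strict local minimum at $x_0$ on $T$, and therefore $x_0$ is an isolated zero of $u$, hence of $s$, on $T$. Translating back to $M$, the leaf $L_0$ possesses a saturated neighbourhood meeting $Z(s)$ only along $L_0$, which is precisely the claim that the vanishing leaves of $s$ are isolated in $M_{/\mathcal{F}}$.

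The step I expect to require the actual work is the second-derivative bookkeeping: one has to check carefully that all terms generated by the non-metricity of $\bar\nabla$ — the $\langle X,\tau\rangle u$ contribution and the discrepancy between $\nabla$ and $\bar\nabla$ — are proportional to $s$ or to $du$ and hence vanish at $x_0$. Once this is verified the computation is just the foliated transcription of the classical one, and the positivity argument then runs verbatim as in \cite{Ba-Fr,Gin}.
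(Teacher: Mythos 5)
Your proof is correct and follows essentially the same route as the paper: compute $\mathrm{Hess}^\nabla(|s|^2)$ at a zero $x_0$, observe that all non-metric $\tau$-corrections carry a factor $s$ or $du$ and hence vanish there, reduce the Hessian to $\frac{2}{q^2}\langle X,Y\rangle|D_bs|_{x_0}^2$ via the twistor equation and the Clifford identity, use the $\nabla^E$-parallel pair $(s,D_bs)$ on the connected manifold to force $D_bs(x_0)\neq 0$, and conclude positive definiteness of the Hessian on a local transversal $T$. The only presentational difference is that you make the saturation of $Z(s)$ and the localization to $T$ explicit upfront, which the paper leaves implicit.
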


\begin{proof}Let us assume $s$ is a nontrivial basic twistor spinor such that $s_x=0$
at $x\in M$. In the following we calculate $\mathrm{Hess}_{x}^\nabla (\left| s\right| ^2)$.

We investigate the two components of the basic Hessian defined by (\ref{basic_hesse}).
\begin{eqnarray*}
X\left( Y\left( \left| s\right| ^2\right) \right)  &=&X\left( 2\,\rm{Re}\left(
\nabla _Ys\mid s\right) \right)  \\
&=&2\,\rm{Re}\left( \nabla _X\nabla _Ys\mid s\right) +2\,\rm{Re}\left( \nabla _Ys\mid \nabla _Xs\right) ,
\end{eqnarray*}
\[
\nabla _XY\left( \left| s\right| ^2\right) =2\,Re\left( \nabla _{\nabla_XY}s\mid s\right) .
\]

As $s_x=0$, the only term we need to study is $\rm{Re}\left( \nabla _Ys\mid \nabla _Xs\right)$.

We now apply (\ref{twist_eq}) and (\ref{modified connection}) and obtain that

\begin{eqnarray*}
\left( \nabla _Ys\mid \nabla _Xs\right)  &=&\left( \bar \nabla _Ys\mid \bar
\nabla _Xs\right) +\frac 12\left\langle Y,\tau \right\rangle \left( s\mid
\nabla _Xs\right)  \\
&&+\frac 12\left\langle X,\tau \right\rangle \left( \nabla _Ys\mid s\right)
+\frac 14\left\langle Y,\tau \right\rangle \left\langle X,\tau \right\rangle \left| s\right| ^2
\\
&=&\frac 1{q^2}\left\langle X\cdot D_bs,Y\cdot D_bs\right\rangle +\frac
12\left\langle Y,\tau \right\rangle \left( s\mid \nabla _Xs\right)  \\
&&+\frac 12\left\langle X,\tau \right\rangle \left( \nabla _Ys\mid s\right)
+\frac 14\left\langle Y,\tau \right\rangle \left\langle X,\tau \right\rangle
\left| s\right| ^2.
\end{eqnarray*}
From \cite[p. 15]{Ba-Fr} we get furthermore that
\[
\left\langle X\cdot D_bs,Y\cdot D_bs\right\rangle =\left\langle
X,Y\right\rangle \left| D_bs\right| ^2,
\]
so at the point $x$,  as $s_x=0$, we obtain
\begin{eqnarray*}
\mathrm{Hess}_{x}^\nabla (\left| s\right| ^2)(X_x,Y_x)&=&2Re\left( \left(\nabla _Ys\right)_x\mid \left(\nabla _Xs\right)_x\right) \\
&=&\frac 2{q^2}\left\langle X_x, Y_x\right\rangle \left| D_bs\right|_x ^2.
\end{eqnarray*}
As $\left( D_bs\right) _x\neq0$ (otherwise, in accordance with the above considerations the twistor spinor $s$ would vanish
everywhere), we get that the basic Hessian of the local function obtained on the local transverse manifold $T$ by projecting the basic
function $\left| s \right| ^2$ is positive defined at $x$; as the basic functions are constant along the leaves of the foliation, the conclusion follows.
\end{proof}

\section{Some physical considerations}

The  Dirac operators in the presence of Riemannian foliations have
attracted much attention in physics. We do not intend to give
any kind of extensive introduction in spin geometry, but some motivating
examples that may lead to possible applications are of interest.

In the last time Sasakian manifolds, as an odd-dimensional cousin of
K\"{a}hler manifolds, have become of high interest in connection with
many modern studies in physics. One of their principal applications in
physics has been in higher-dimensional supergravity, string theory and
$M$-theory where they can provide backgrounds for reduction to
lower-dimensional spacetimes. AdS/CFT conjecture \cite{JM} relates
quantum gravity, in certain backgrounds, to ordinary quantum field
theory without gravity. In particular the AdS/CFT correspondence
relates Sasaki-Einstein geometry, in dimensions five and seven, to
superconformal field theory in dimensions four and three, respectively.

The foliation generated by the Reeb vector field $\xi$ has a transverse
K\"{a}hler structure. If the orbits of $\xi$ are closed, the Sasakian
structure is called {\it quasi-regular}. The Reeb field generates a
locally free $S^1$-action such that the leaf space is an orbifold and
the transverse K\"{a}hler structure projects to it. There are examples
of Sasakian structures which are not quasi-regular \cite{GMSW}. In the
opposite case, if the orbits of $\xi$ do not all close, the Sasakian
structure is said to be {\it irregular}.

The properties of Sasaki-Einstein spaces can be obtained from an
alternative definition of Sasaki-Einstein manifold connected with the
existence of a Killing spinor \cite{CB}. The geometric features of
generic supergravity solutions with unbroken supersymmetry and fluxes,
so the relation between Killing spinor and geometry that admits such a
spinor needs to be further elucidated \cite{NK}. On the other side,
as pointed out in \cite{Gin}, the Killing spinors are highly relevant for the
investigation of supersymmetric models for string theory in dimension 10. From this
point of view we hope that our results concerning transversal Killing spinors would be
helpful for the investigation of this geometrical objects in the particular framework represented
by Riemannian foliations.

Another interesting example is represented by the Euclidean
Taub-Newman-Unti-Tamburino (Taub-NUT) space which appears in various
problems. Hawking \cite{SWH} has suggested that the Euclidean Taub-NUT
metric might give rise to the gravitational analogue of the Yang-Mills
instantons. Also this metric is the space part of the line element of
the Kaluza-Klein monopole.

Iwai and Katayama \cite{IK} generalized the Taub-NUT metrics in the
following way. Let us consider a metric $\bar{g}$ on an open interval
$U$ in $(0, +\infty)$ and a family of Berger metrics $\hat{g}(r)$ on
$S^3$ indexed by $U$. Then the twisted product metrics
$g = \bar{g} +\hat{g}(r)$ on the annulus $U \times S^3 \subset
\mathbb{R}^4 \verb+\+ \{0\}$ is called a generalized Taub-NUT metric.

The Taub-NUT metrics has been drawing wide interest. In particular,
from the viewpoint of dynamical systems, the symmetry of the dynamical
system associated with that metric is similar to that for the
Coulomb/Kepler problem. The four-dimensional problem is reduced to an
$S^1$ action when the associated momentum mapping takes nonzero fixed
values. If the original Hamiltonian system admits a symmetry group that
is commutative with the group used for the reduction, the reduced
Hamiltonian system admits the same symmetry group \cite{MW}. In the
case of the Taub-NUT metrics the reduced Hamiltonian system is the
three-dimensional Kepler problem along with a centrifugal potential and
Dirac's monopole field.

The importance of anomalous Ward identities in particle physics is well
known. The anomalous divergence of the axial vector current in a
background gravitational field is directly related to the index
theorem. Namely, the axial anomaly is interpreted as the index of the
chiral Dirac operator. The difference between the number of null
states of positive and of negative chirality on a ball or annular
domain, may become nonzero for suitable choices of the parameters of
the metric and of the domain when one imposes the Atiyah-Patodi-Singer
spectral condition at the boundary. In the case of the standard
Taub-NUT space, which is hyperK\"{a}hler and therefore scalar-flat,
it can be proved that there are no harmonic $L^2$ spinors using the Lichnerowicz identity and
the infiniteness of the volume \cite{MV}. Moreover there do not exist
$L^2$ harmonic spinors on $\mathbb{R}^4$ for the generalized Taub-NUT
metrics. In particular, the $L^2$ index of the Dirac operator vanishes
\cite{MM}.

\section*{Acknowledgments}
M. Visinescu was supported by CNCS-UEFISCDI, project number
PN-II-ID-PCE-2011-3-0137. G.E. V\^{\i}lcu was supported by
CNCS-UEFISCDI, project number
PN-II-ID-PCE-2011-3-0118.

Adrian Mihai IONESCU\\
Politehnica University of Bucharest,\\
Department of Mathematics,\\
Splaiul Independen\c{t}ei,
Nr. 313, Sector 6,\\
Bucure\c sti 060042-ROMANIA\\
E-mail: adrian{\_}ionescu@mathem.pub.ro\\

Vladimir SLESAR\\
University of Craiova,\\
Department of Mathematics,\\
Str. Al.I. Cuza, Nr. 13,\\
Craiova 200585-ROMANIA\\
E-mail: vlslesar@central.ucv.ro\\

Mihai VISINESCU\\
National Institute for Physics and Nuclear Engineering,\\
Department of Theoretical Physics,\\
P.O.Box M.G.-6, Magurele, Bucharest-ROMANIA\\
E-mail: mvisin@theory.nipne.ro\\

Gabriel Eduard V\^{I}LCU$^{1,2}$ \\
      $^1$University of Bucharest, Faculty of Mathematics and Computer Science,\\
      Research Center in Geometry, Topology and Algebra,\\
      Str. Academiei, Nr. 14, Sector 1,\\
      Bucure\c sti 70109-ROMANIA\\
      E-mail: gvilcu@gta.math.unibuc.ro\\
      $^2$Petroleum-Gas University of Ploie\c sti,\\
       Department of Mathematical Modelling, Economic Analysis and Statistics,\\
      Bulevardul Bucure\c sti, Nr. 39\\
      Ploie\c sti 100680-ROMANIA\\
      E-mail: gvilcu@upg-ploiesti.ro

\end{document}